\pdfoutput=1
\documentclass[12pt, a4paper]{article}
\usepackage[T1]{fontenc}

\usepackage[vmargin = 1.2in, hmargin =1.2in]{geometry}
\usepackage{setspace}
\usepackage[
activate={true,nocompatibility},
factor=100,
expansion=false,           
protrusion=true,          
tracking=true,            
spacing=true,             
shrink=7,                
stretch=7,               
kerning=true,             
final                      
]{microtype}
\microtypecontext{spacing=nonfrench}

\usepackage{mathtools, amsthm, amssymb, bm}
\mathtoolsset{centercolon}
\allowdisplaybreaks

\usepackage[caption=false]{subfig}
\usepackage{caption}
\usepackage{graphicx,tikz,xcolor}
\usetikzlibrary{arrows.meta,positioning,patterns,decorations.pathreplacing,calligraphy}
\usepackage{pgfplots, pgfplotstable}
\pgfplotsset{compat=newest}
\pgfplotsset{
	label style={font=\scriptsize},  
	legend style={
		font=\footnotesize,
		nodes={font=\footnotesize}
	},
	ticklabel style={font=\scriptsize},
	axis lines=middle,
	axis line style={-latex},
	ylabel style={
		above,
		xshift=5pt,
		font=\scriptsize
	},
	xlabel style={
		below,
		xshift=-6pt,
		yshift=-12pt,
		font=\scriptsize
	},
	legend pos=north west,
	width=0.38\textwidth, scale only axis,
	samples = 700,
}
\usepgfplotslibrary{fillbetween}
\allowdisplaybreaks[1]
\pgfmathsetseed{5}
\pgfplotstablenew[
create on use/randwalk1/.style={
	create col/expr accum={\pgfmathaccuma +0.01+ 0.1*rand}{0}
},
create on use/randwalk2/.style={
	create col/expr accum={\pgfmathaccuma -0.004 + 0.12*rand}{0}
},
create on use/randwalk3/.style={
	create col/expr accum={\pgfmathaccuma - 0.005+ 0.1*rand}{0}
},
columns={randwalk1,randwalk2,randwalk3}
]
{600}
\loadedtable

\definecolor{blue}{RGB}{43,147,206}
\definecolor{GmailBlue}{RGB}{42, 93, 176}

\definecolor{red}{RGB}{221,126,0}
\definecolor{orange}{RGB}{208,106,11} 

\definecolor{green}{RGB}{0,158,115} 
\definecolor{gray}{RGB}{73, 73, 73}
\definecolor{yellow}{RGB}{230,168,0}
\definecolor{pink}{RGB}{211,0,214}

\usepackage[authoryear]{natbib}
\usepackage[
bookmarks=true,
bookmarksnumbered=true,
bookmarksopen=true,
pdfborder={0 0 0},
breaklinks=true,
colorlinks=true,
linkcolor=GmailBlue,
citecolor=GmailBlue,
filecolor=GmailBlue,
urlcolor=GmailBlue,
hypertexnames=false,
plainpages=false,
]{hyperref}

\makeatletter
\newcommand\org@hypertarget{}
\let\org@hypertarget\hypertarget
\renewcommand\hypertarget[2]{%
	\Hy@raisedlink{\org@hypertarget{#1}{}}#2%
} \makeatother

\makeatletter
\renewcommand*{\NAT@spacechar}{\ }
\makeatother 
\makeatletter
\AtBeginDocument{ \hypersetup{ 
		pdftitle = {Coordination in complex environments}, 
		pdfauthor = {Pietro Dall'Ara} 
} }
\makeatother

\usepackage{apptools}
\AtAppendix{\counterwithin{lemma}{section}}
\AtAppendix{\counterwithin{proposition}{section}}
\AtAppendix{\counterwithin{theorem}{section}}
\AtAppendix{\counterwithin{corollary}{section}}
\AtAppendix{\counterwithin{remark}{section}}
\AtAppendix{\counterwithin{definition}{section}}

\theoremstyle{plain}

\newtheorem{proposition}{Proposition}
\newtheorem{lemma}{Lemma}
\newtheorem{corollary}{Corollary}

\theoremstyle{definition}

\newtheorem{remark}{Remark}

\newcommand*\diff{\mathop{}\!\mathrm{d}}

\DeclareMathOperator*{\argmax}{Argmax}

\makeatletter
\let\originalleft\left
\let\originalright\right
\renewcommand{\left}{\mathopen{}\mathclose\bgroup\originalleft}
\renewcommand{\right}{\aftergroup\egroup\originalright}
\renewcommand*{\NAT@spacechar}{\ }
\makeatother

\DeclareMathOperator*{\var}{Var}
\DeclareMathOperator*{\cov}{Cov}

\author{\href{https://sites.google.com/view/pietrodallara12}{\color{black}{Pietro Dall'Ara}}}

\usepackage{datetime}
\newdateformat{specialdate}{\THEDAY~\monthname[\THEMONTH] \THEYEAR}
\date{\specialdate\today}

\title{\bf Coordination in complex environments\footnote{This paper is based on the first chapter of my PhD dissertation at BC. I am indebted to Mehmet Ekmekci for continual guidance, and to Laurent Mathevet, Utku \"Unver, and Bumin Yenmez for their advice. I am grateful for comments from Zeinab Aboutalebi, Georgy Artemov, Matteo Bizzarri, Giacomo Calzolari, Krishna Dasaratha, Marco Errico, Marcelo Fernandez, Joel Flynn, Hideo Konishi, Maciej Kotowski, Alessandro Lavia, Lucas Maestri, Chiara Margaria, Andrea Mattozzi, Konuray Mutluer, Philip Neary, Parth Parihar, Luigi Pollio, Sven Rady, Giacomo Rubbini, Tom Rutter, Junze Sun, and Yu Fu Wong; contact: \href{mailto:pietro.dallara@unina.it}{pietro.dallara@unina.it}.}}

\begin{document}

	\maketitle
		\begin{abstract}
		 Coordination is an important aspect of innovative contexts, where: the more innovative a course of action, the more uncertain its outcome. To study the interplay of coordination and informational ``complexity'', I embed a beauty-contest game into a complex environment. I identify a new conformity phenomenon. This effect may push towards exploration of unknown alternatives, or constitute a status quo bias, depending on the network structure of the players' interaction. In an application, I show that an organization with decentralized authority can implement profit maximization in a sufficiently complex environment.
	\end{abstract}

	\noindent Keywords: Coordination, Conformity, Complexity, Status Quo.\\
	\noindent JEL codes: D83, D84, D85.
	
	\newpage 
	\tableofcontents
	\newpage 
	\section{Introduction}
	Coordination poses challenges in highly uncertain environments. Consider retailers that share the same manufacturer and choose marketing campaigns, as in co-op advertising \citep{jorgensen_survey_2014}. Innovative advertisement comes with uncertainty about the brand image of the manufacturer. Moreover, retailers need to coordinate their advertisements as well as succeed in distinct markets. Does uncertainty lead to a unified brand image, and do the campaigns align with the manufacturer's interests? Coordination is also an important aspect of technological  innovation. For example, hospitals find it advantageous to select  popular electronic medical record vendors \citep{lin_strategic_2023}. Does uncertainty lead to innovation?

	This paper studies coordination problems in the face of ``incremental'' uncertainty, referred to as \emph{complexity}, such that: the more innovative a decision, the more uncertain its outcome.
	
	
	In the model, every player wants the outcome of her action to be close to a target. The target of a player combines her fixed favorite outcome with the outcomes of the other players, leading to a coordination-adaptation tradeoff. A given network of players determines how a target weighs individual outcomes. Analogous coordination motives arise in  organizations, oligopolies, and labor markets \citep{marschak_economic_1972, topkis_supermodularity_1998, diamond_aggregate_1982}.
	
	The complexity is modeled as uncertainty about how individual actions are mapped to individual outcomes. There is an outcome function determining the outcome of every single policy,  and players independently choose policies. Players know that the outcome function is the realized path of a Brownian motion. The initial point of the Brownian motion represents the \emph{status quo}: a known outcome corresponds to the ``initial'' policy. Instead, different policies than the status quo lead to outcome that are known only up to a noise. The more an outcome differs in expectation from the status-quo outcome, the higher its variance; this approach to modeling complex environments is introduced by \citet{callander_searching_2011}.

	This approach captures the idea that more innovative decisions lead to more volatile outcomes. The uncertainty is summarized by a status quo and a \emph{covariance structure}. The status quo is an action inducing relatively low uncertainty in the corresponding individual outcome. The covariance structure describes the likelihood that two actions yield similar outcomes. Complexity is relevant when deciding about the adoption of novel pricing strategies and how boldly to innovate in new technologies. In these examples, the covariance structure is important due to the strategic interaction ensuing from coordination motives, and the status quo is, respectively, a tried-and-tested pricing rule, and a technology currently in place.

	The interplay of coordination and complexity leads to a novel conformity phenomenon. In particular, expected outcomes are closer across players than in an environment
	without complexity.\footnote{This observation is described in Section \ref{sec:twoplayers}.} This conformity occurs in addition to the status-quo bias identified by \citeauthor{callander_searching_2011} and the conformity merely due to coordination motives. The equilibrium characterization separates the new conformity from previously	studied phenomena, by decomposing expected outcomes in terms of two elements: the equilibrium outcomes in a non-complex environment, and a new effect arising from the interplay between complexity and coordination (Proposition \ref{prop:equilibrium}).

	The new element in the equilibrium characterization arises from an endogenous
	leader-follower relationship among players.
	Consider the two ways in which the policy of a player influences the incentives of
	her opponents. First, the expected target of a player is a function of the policies of her opponents. Second, the
	policy of a player determines the correlation between her individual outcome and her opponents’
	outcomes. The follower in a pair of players is the one with the closest policy to the
	status quo. Given a pair of players with different policies, the only player whose
	policy directly affects the covariance is the follower, not the leader. As a result,
	the follower has an extra incentive to explore by choosing a policy in the direction
	of the leader. The new incentive of the follower is the source of conformity. The
	leader-follower relationship induces an asymmetry among players that interacts with
	the network of connections: the follower is pulled away from the status quo by the
	leader, whereas the leader is pushed towards the status quo.
	
	New coordination problems arise in complex environments. The source of equilibrium multiplicity is the presence of endogenous ``kinks'' in payoffs. Intuitively, there is a premium to choosing the same policy as another player because two individual outcomes are the same if the policies are the same. Hence, coordination problems are linked to the leader-follower relationship: by choosing the same policy as an opponent, a player nullifies the asymmetry. The location of the kinks is determined in equilibrium: a player's payoff has a kink at every policy of an opponent. 

	Complexity has implications for the management of organizations with decentralized
	authority, which typically involve practices such as co-op advertising and multi-branding.
	In a multi-division organization, each division manager faces a tradeoff between coordinating other managers and adapting
	to idiosyncratic needs. Moreover, communication frictions create uncertainty in the implementation of managerial instructions. I show that an organization with
	decentralized authority can implement profit maximization in sufficiently complex
	environments (Proposition \ref{prop:org}.) The result points to complexity as a rationale for decentralized organizations.

	\paragraph{Related literature}
	 \citet{callander_searching_2011} introduces the model of complexity to study a dynamic exploration-exploitation tradeoff. The main role of the covariance structure in the dynamic interaction is to discipline learning; see \citet{bardhi_learning_2026} for a survey.  \citet{cetemen_collective_2023} study a similar	complex environment, in which discoveries are correlated over time and members of a team contribute resources for joint exploration. I contribute to the complexity literature by studying coordination motives: the model abstracts from intertemporal incentives and learning, and focuses on coordination among players, each controlling an individual uncertain outcome.\footnote{\citet{garfagnini_uncertainty_2018} studies the rich welfare properties of the Brownian-motion structure for a	network game with a ``degenerate'' covariance: the decision-outcome mappings
	 	are player-specific independent Brownian motions.}

	 Other work considers strategic interactions and Gaussian processes. In particular, the covariance structure has a direct role in \citet{bardhi_local_2023} and \citet{bardhi_selective_2023}; in these models, a principal	incentivizes agents to provide information about the outcome function. These authors study covariances that are characterized by the ``nearest-attribute''	property, including the Brownian covariance. I focus on the Brownian covariance because of two characteristics. First, the Brownian covariance preserves the strategic
	complementarities of the coordination game (Lemma \ref{lem:ID}). Second, this covariance
	contains the leader-follower asymmetry that explains conformity in a simple way (Section \ref{sec:twoplayers}). Other covariances are ``asymmetric'' but not supermodular (e.g.,
	squared-exponential covariance), and vice versa (squared-polynomial).

	The literature on coordination games with quadratic ex-post payoffs includes
	models of oligopolistic competition, peer effects, and network games \citep{jackson_games_2015}. Complexity introduces coordination problems under the standard
	upper bound on the strength of coordination motives. Moreover, complexity makes
	best responses nonlinear, due to kinks in the expected payoffs.
	
%
	
	\section{Model}\label{sec:model}
	We study a coordination game played by $n$ players. 
	
	The utility of player $i\in I\coloneqq \{1, \dots, n\}$ from the profile of outcomes $ \bm x = (x_1, \dots, x_n) \in\mathbb R^ n$ is 
	\begin{align*}
		u_i(\bm x) = -\Big (d_i + \sum _{j\in I}g_{ij} x_j- x_i \Big )^2,
	\end{align*}
	in which $d_i$ is the favorite outcome of player $i$, and $g_{ij}\ge 0$ determines how much the target $d_i + \sum _{j\in I}g_{ij} x_j$ weighs the outcome of opponent $j$, with $g_{ij}= 0$ if $i=j$. We denote by $\bm G$ the matrix with $i$-$j$ element $g_{ij}$, by $\bm I$ the identity matrix, and assume that $\bm I-\bm G$ is invertible.
	
	Every player $i$ chooses a policy $p_i\in P\coloneqq [p_0, \infty)$, for a \emph{status-quo policy $p_0$}. The outcome corresponding to policy $p$ is given by the outcome function $X \colon  \mathbb R \to \mathbb R$, evaluated at $p$. The outcome function is the realized path of a Brownian motion with drift $\mu <0$, variance parameter $\sigma ^2>0 $, and starting point $(p_0, X (p_0))$; see Figure \ref{fig:outcomefunction}.	\begin{figure}[tp]
		\centering
		\subfloat[The outcome function $X$ maps individual policies to individual outcomes and is given by the realized path of a Brownian motion.\protect	\label{fig:outcomefunction}]{
\includegraphics[width=0.4\textwidth]{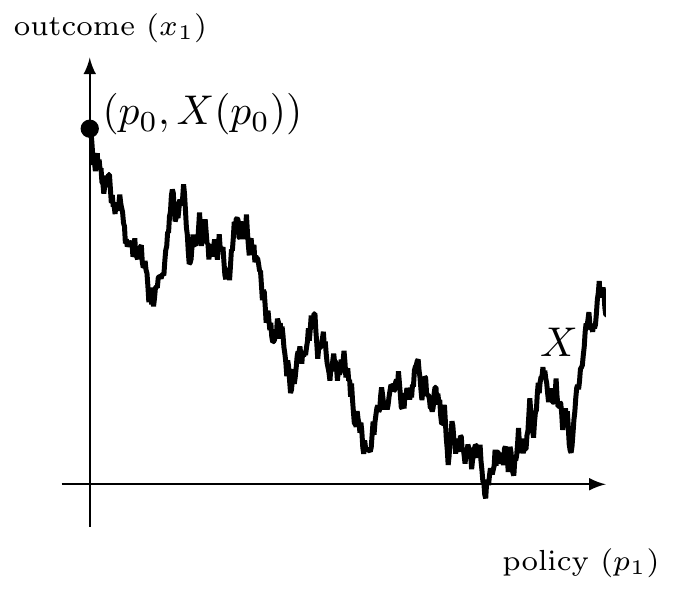}
		}\quad 
		\subfloat[In isolation, player 1 chooses a policy with expected outcome $d_1+k$, which solves her mean-variance tradeoff, or the status-quo policy.\protect\label{fig:supex:B}]{
			\begin{tikzpicture}
				\begin{axis}[
					declare function={
						xnaught=4;
						k=0.6;
						m=-2;
						g=0.3333;
						done=2;
						dtwo=1;
						x1eq1 = (done + k* - xnaught) / m;
						Ex1eq1 = xnaught + m*x1eq1;
					},
					ylabel={ outcome ($x_1$)},
					xlabel={ policy ($p_1$)},
					xtick={x1eq1},
					xticklabels = { $q^1$},
					ytick = {done, Ex1eq1, xnaught},
					yticklabels = { $d_1$,  $d_1+k$,  $X_0$},
					xmax=1.6,
					xmin=0,
					ymax=4.5,
					ymin=1,
					legend pos=north west,
					extra x ticks={0},
					extra x tick labels={ $p_0$},
					]
					\addplot[domain=0:4, black,ultra  thick] {xnaught +m*x};\addlegendentry{$E(X(\cdot))$};
					\addplot[domain=0:x1eq1, blue,very  thick, dashed] {Ex1eq1};
					\draw[blue, very thick, dashed] (axis cs: x1eq1 , 0) -- (axis cs:x1eq1,Ex1eq1); 
				\end{axis}
			\end{tikzpicture}
		}
		\caption{Panel (a) illustrates the outcome function; panel (b) illustrates the equilibrium without coordination motives, that is, with $g_{ij}=0$.}
		\label{fig:supex}
	\end{figure} Players do not observe $X$, but they know the underlying stochastic process. Player $i$ evaluates a policy profile $\bm p \in P^n$ by its expected utility $U_i(\bm p) \coloneqq  E(u_i(\bm X(\bm p)))$, letting $\bm X(\bm p)$ denote the outcome profile $(X_1(p_1), \dots, X_n(p_n))$.

	We study the strategic-form game $\Gamma = \langle I, (P, U_i)_{i\in I} \rangle $. The policy profile $\bm p$ is an \emph{equilibrium} if it is a Nash equilibrium of $\Gamma$.

	\paragraph{Analysis}

	The Brownian motion disciplines the uncertainty perceived by players. If a player opts for the status-quo policy, then her outcome is the \emph{status-quo outcome $X(p_0)$} with certainty. If a player opts for a different policy than the status quo, then she knows her outcome only up to a Gaussian noise. The drift parameter $\mu$ determines how the expected outcome $E(X(p))$ changes with the policy $p$.  The variance parameter $\sigma ^2$ determines how the variance of the outcome $X(p)$ increases with the distance of $p$ from the status-quo policy $p_0$. These relationships are linear due to the Brownian motion:
	\begin{align*}
		E(X(p))=X(p_0)+\mu (p-p_0) \ \text{and} \ \var (X(p)) = \sigma^2(p-p_0).
	\end{align*}
	The complexity parameter $k\coloneqq  \sigma ^2 / 2|\mu|$ captures how the uncertainty of an outcome changes as its expected value moves away from the status quo.

	The joint distribution of the outcomes in $\bm X(\bm p)$ is Gaussian. This distribution is determined by the expectation, variance, and covariance of outcomes as functions of the policies. By the independent-increment property of the Brownian motion, the covariance is the variance of the outcome with the closest policy to the status quo,
	\begin{align*}
		\cov (X(p_i), X(p_j))=\sigma ^2(\min\{p_i, p_j\}-p_0).
	\end{align*}


	If player $i$ chooses her policy in isolation, i.e., if $g_{ij}=0$ for all $j$, then she faces a simple mean-variance tradeoff (Figure \ref{fig:supex:B}). The solution is either the status quo or, if uncertainty is not too overwhelming, the policy $p_i$ such that $E(X(p_i)) = d _i + k  $. The optimal policy reflects a status-quo bias. First, for high $k$, player $i$ does not find that decreasing the expected outcome below $X(p_0)$ outweighs the cost $\var (X(p_i))$ for any $p_i>p_0$. Second, conditional on incurring uncertainty with a policy $p_i>p_0$, the mean-variance tradeoff  is smooth, and is solved when the expected outcome of $p_i$ is in between the favorite outcome and the status-quo outcome. Complexity $k $ captures the intensity of status-quo bias for isolated players that find it optimal to incur some uncertainty. Specifically, let $\bm G$ be a matrix of zeroes. There exists a unique equilibrium $\bm q\in P^n$, and $\bm q$ is such that  $E(X(q_i)) = \min \{X(p_0), d_i+k\}$ for all $i$. The one-period restriction of the dynamic model in \citet{callander_searching_2011} has these features.

	The second benchmark for the analysis is the game without complexity. The unique equilibrium is described, e.g., in \citet{ballester_who_2006}. The normal-form game $\Gamma^0 \coloneqq    \langle I; ( u_i(E(\bm X(\cdot))), P)_{i\in I} \rangle $ captures a game without uncertainty because players perceive the outcome function as given by the drift line $p \mapsto X(p_0)+\mu(p-p_0)$. If $\bm I-\bm G$ is invertible, then there exists a unique Nash equilibrium of $\Gamma^0 $. The equilibrium is determined by the weighted centralities in the network of players represented by $\bm G$. Specifically, the unique equilibrium, if interior, is the policy profile $\bm p $ such that $E(\bm X(\bm p))= \bm \beta \coloneqq (\bm I-\bm G)^{-1}\bm d$. Hence, the equilibrium expected outcomes and, a fortiori, the policies are determined by the weighted centralities.

	In general, a given distance between the expected outcome of player $i$ and the status quo is ``inexpensive,'' in terms of uncertainty, if it comes with high covariance between player $i$'s own outcome and the outcomes of other players. For example, the payoff of 1 in the two-player case is
	\begin{align*}
		U_1(\bm p) 
		= - &(d_1+  g_{12}E( X (p_{2}))  -E(X(p_1)))^2 -  \var ( X (p_1))  \\
		&+2 g_{12} \cov (X (p_1),X (p_{2})) -g_{12}^2\var(X (p_2)).
	\end{align*} 
 	Intuitively, the payoff increases with the covariance between outcomes, given that $g_{12}$ is nonnegative.  However, the impact of player $1$ on the covariance depends on the relative position of the policies. Specifically, the policy of player $1$ fully determines the covariance if it is below $p_2$, and is irrelevant if it exceeds $p_2$. For this reason, the mean-variance tradeoff is kinked as a function of $p_1$. The location of the kinks is endogenous: the payoff of player $1$ has a kink at the policy of player $2$.

	The covariance term $\cov (X (p_i),X (p_{j})) $ is  concave as a function of the policies, so an equilibrium exists by standard arguments.
	
	\begin{proposition}\label{prop:existence}
		There exists an equilibrium.
	\end{proposition}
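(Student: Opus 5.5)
The plan is to invoke the Debreu--Glicksberg--Fan existence theorem: it suffices that each strategy set can be taken to be a nonempty compact convex subset of $\mathbb R$, that each $U_i$ is continuous in $\bm p$, and that each $U_i$ is quasiconcave (here, concave) in $p_i$. The strategy set $P=[p_0,\infty)$ is not compact, so the first step is to truncate to $[p_0,\bar p]$ for a suitably large $\bar p$ and show that this is without loss.

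\textbf{Step 1 (closed form and continuity).} Expanding $u_i$ and taking expectations, $U_i(\bm p)$ equals minus the squared expected gap, $\bigl(d_i+\sum_j g_{ij}E(X(p_j))-E(X(p_i))\bigr)^2$, minus the variance of $\sum_j g_{ij}X(p_j)-X(p_i)$. Writing this variance with $\var(X(p))=\sigma^2(p-p_0)$ and $\cov(X(p_i),X(p_j))=\sigma^2(\min\{p_i,p_j\}-p_0)$ gives
\begin{align*}
U_i(\bm p)=&-\Big(d_i+\sum_j g_{ij}E(X(p_j))-E(X(p_i))\Big)^2-\sigma^2(p_i-p_0)\\
&+2\sigma^2\sum_j g_{ij}(\min\{p_i,p_j\}-p_0)-\var\Big(\sum_j g_{ij}X(p_j)\Big).
\end{align*}
The last term does not depend on $p_i$. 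Every term is a continuous function of $\bm p$, since $E(X(p))$ is affine and $\min$ is continuous, so $U_i$ is jointly continuous.

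\textbf{Step 2 (concavity in own policy).} Fix $\bm p_{-i}$. The first term is minus the square of an affine function of $p_i$, hence concave. The term $-\sigma^2(p_i-p_0)$ is linear. Each $\min\{p_i,p_j\}$ is concave in $p_i$ and carries the nonnegative coefficient $2\sigma^2 g_{ij}$. So $U_i(\cdot,\bm p_{-i})$ is concave, which is the observation made just before the statement.

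\textbf{Step 3 (compactification).} This is the main obstacle. I would find $\bar p$ such that every best response to any $\bm p_{-i}\in[p_0,\bar p]^{n-1}$ lies in $[p_0,\bar p]$. Since $\mu<0$, raising $p_i$ lowers $E(X(p_i))$ linearly. Once $E(X(p_i))$ is below the target expectation $d_i+\sum_j g_{ij}E(X(p_j))$, the squared term increases in $p_i$ and the linear variance cost keeps rising. The only countervailing force is the covariance term, and it stops growing once $p_i\ge\max_j p_j$.

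I would first try to bound the target from below by an affine function of $\bar p$ with slope $\mu\sum_j g_{ij}$. The best response then satisfies $E(X(p_i))\ge d_i+\mu\sum_j g_{ij}(\bar p-p_0)+X(p_0)\sum_j g_{ij}+\text{const}$. This bound keeps $p_i\le\bar p$ provided $\sum_j g_{ij}<1$, which is the standard bound on the strength of coordination motives mentioned in the related literature.

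If only invertibility of $\bm I-\bm G$ is available, that bound can fail. In that case I would fall back on a different route. First, establish existence on $[p_0,\bar p]^n$ for every $\bar p$. Then use the first-order conditions at interior non-kink points, which are linear in expected outcomes with matrix $\bm I-\bm G$ plus bounded complexity corrections of order $k$. These conditions should give an $\bar p$-independent bound on the equilibria of the truncated games. An equilibrium of a truncated game lying strictly inside the truncation is then an equilibrium of $\Gamma$, because concavity makes a local optimum on the truncated set global on $P$.

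\textbf{Step 4 (conclusion).} On the compact convex product $[p_0,\bar p]^n$, with $U_i$ continuous and concave in $p_i$, the best-response correspondence is nonempty, convex-valued and upper hemicontinuous by Berge's theorem. Kakutani's fixed point theorem then yields a fixed point. By Step 3 this fixed point is a Nash equilibrium of $\Gamma$.
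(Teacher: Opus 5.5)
Your Steps 1, 2 and 4 are fine, and the overall architecture (truncate, obtain existence on the compact box from joint continuity and own-concavity, then show that the truncation does not bind) is the paper's. Route (a) of Step 3 is a complete argument when every row sum $\sum_j g_{ij}$ is below one. At any best response the kinked first-order condition gives $E(X(p_i))\ge\min\{X(p_0),\,d_i+\sum_j g_{ij}E(X(p_j))+k-2k\sum_j g_{ij}\}$, so the cube $[p_0,\bar p]^n$ is mapped into itself for $\bar p$ large.

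The gap is your fallback for the case in which only invertibility of $\bm I-\bm G$ is assumed. There the $\bar p$-independent bound on truncated equilibria is asserted (``should give'') rather than derived. It cannot be derived, because in that generality the proposition is false. Take $n=2$, $g_{12}=g_{21}=g>1$ (so $\det(\bm I-\bm G)=1-g^2\neq 0$) and $d_1=d_2=d$ with $d+k<(1-g)X(p_0)$, e.g.\ $g=2$, $X(p_0)=10$, $d=-20$, $k=1$. At any best response, $E(X(p_i))\le d+gE(X(p_j))+k-2kgf_{ij}\le d+gE(X(p_j))+k$ for some consistent $f_{ij}\in[0,1]$, with equality in the first step if $p_i>p_0$. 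Summing over the two players gives $(1-g)\bigl(E(X(p_1))+E(X(p_2))\bigr)\le 2(d+k)$. Hence $E(X(p_1))+E(X(p_2))\ge 2(d+k)/(1-g)>2X(p_0)$, which is impossible. So $\Gamma$ has no equilibrium, and every equilibrium of every truncated game has some player at the truncation point.

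The paper's own proof makes the same leap through its $\bm M$-form characterization and the bound $\overline x,\overline p$. Passing the first-order inequalities through $\bm M$ requires $\bm M\ge 0$. In the example above, that characterization even certifies $(p_0,p_0)$, although at that profile each player strictly gains by raising her policy.

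What you are missing is therefore a hypothesis: the standard bound $\rho(\bm G)<1$. For $\bm G\ge 0$ this is equivalent to $\bm M\ge 0$, and it is implied by your row-sum condition. Under it, route (a) goes through directly once you replace the uniform cube by a weighted box. Let $\bm w\coloneqq \bm M\bm 1$, so that $\bm w\ge \bm 1$ and $\bm G\bm w=\bm w-\bm 1$. Suppose $E(X(p_j))\ge -cw_j$ for every $j$. Then $d_i+\sum_j g_{ij}E(X(p_j))+k-2k\sum_j g_{ij}\ge d_i+k-2k\sum_j g_{ij}+c-cw_i\ge -cw_i$ as soon as $c\ge -(d_i+k-2k\sum_j g_{ij})$. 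Taking also $c\ge\max\{0,-X(p_0)\}$, the displayed first-order bound gives $E(X(p_i))\ge -cw_i$ at $i$'s best response. Hence the box $\prod_i[p_0,\,p_0+(X(p_0)+cw_i)/|\mu|]$ is mapped into itself by the best responses, which are unique by strict concavity. Kakutani's theorem then delivers an equilibrium of $\Gamma$ without any a priori bound on truncated equilibria.
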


	The covariance of the Brownian motion exhibits increasing differences in the pair of policies, so $U_i$ exhibits increasing differences.

	\begin{lemma}\label{lem:ID}
		The payoff $U_i$ exhibits increasing differences  in $(p_i, \bm p_{-i})$.
	\end{lemma}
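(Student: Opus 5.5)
I will write $U_i$ in closed form and check increasing differences term by term. Expanding the square and taking expectations with the Gaussian moments from Section \ref{sec:model},
\begin{align*}
U_i(\bm p) = -\Big(d_i + \sum_{j} g_{ij} E(X(p_j)) - E(X(p_i))\Big)^2 - \var\Big(\sum_j g_{ij}X(p_j) - X(p_i)\Big).
\end{align*}
The variance term expands into $-\var(X(p_i)) + 2\sum_j g_{ij}\cov(X(p_i),X(p_j)) - \sum_{j,l} g_{ij}g_{il}\cov(X(p_j),X(p_l))$. Increasing differences in $(p_i,\bm p_{-i})$ is equivalent to increasing differences of $U_i$ in $(p_i,p_j)$ for each $j\neq i$, holding the other coordinates fixed. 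Since a sum of functions with increasing differences has increasing differences, and terms depending only on $p_i$ or only on $\bm p_{-i}$ trivially have them, it suffices to look at the terms that involve $p_i$ together with some $p_j$.

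There are only two such terms. The first is the cross term in the squared mean, $2 g_{ij} E(X(p_i))E(X(p_j))$, up to additive functions of $p_i$ alone or $\bm p_{-i}$ alone. Since $E(X(p))=X(p_0)+\mu(p-p_0)$ is affine, this term equals $2g_{ij}\mu^2 p_ip_j$ plus separable pieces. Because $g_{ij}\ge 0$, it has increasing differences; equivalently, its cross-partial $2g_{ij}\mu^2\ge0$. Note that the sign of $\mu$ is irrelevant, since it enters squared.

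The second is the covariance term $2g_{ij}\sigma^2(\min\{p_i,p_j\}-p_0)$. The key step is that $(a,b)\mapsto \min\{a,b\}$ has increasing differences on $\mathbb R^2$. I will verify this directly rather than through derivatives, because $\min$ is not differentiable on the diagonal (these are exactly the kinks the paper emphasizes). Take $a'\ge a$ and $b'\ge b$; I need $\min\{a',b'\}-\min\{a,b'\}\ge \min\{a',b\}-\min\{a,b\}$. For fixed $a<a'$, the map $b\mapsto \min\{a',b\}-\min\{a,b\}$ equals $0$ for $b\le a$, $b-a$ on $[a,a']$, and $a'-a$ for $b\ge a'$, so it is nondecreasing in $b$. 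Multiplying by $2g_{ij}\sigma^2\ge0$ preserves the property.

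Summing, $U_i$ has increasing differences in $(p_i,p_j)$ for every $j$, hence in $(p_i,\bm p_{-i})$. The only real obstacle is the nondifferentiable $\min$, which the case check above handles. The essential inputs are $g_{ij}\ge0$ and that the Brownian covariance is the supermodular function $\min$.
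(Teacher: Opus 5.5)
Your proof is correct and takes essentially the paper's route. The paper computes the mixed second difference of $U_i$ in $(p_i,p_j)$ and writes it as $2g_{ij}$ times the mixed difference of $E(X(p_i)X(p_j))$, that is, the product of mean increments plus the covariance of increments; these are exactly the two cross terms you isolate, and your explicit check that $\min$ has increasing differences supplies the step the paper only asserts for the Brownian covariance.
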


	Two observations explain this result. First, the utility from a profile of outcomes $u_i(\bm x)$ satisfies increasing differences in $x_i$ and $x_j$, which implies that the utility from a profile of expected outcomes $u_i(E(\bm X(\bm p)))$ satisfies increasing differences in $p_i$ and $p_j$. Second, the payoff $U_i(\bm p)$ adds variance and covariance elements to $u_i(E(\bm X(\bm p)))$, so $U_i(\bm p)$ exhibits increasing differences in $p_i$ and $p_j$.
	
	In Section \ref{sec:twoplayers}, we analyze a two-player example. The analysis uses the increasing-differences property to describe the equilibria the game.\footnote{Any Gaussian process with a covariance function that satisfies increasing differences preserves the sign of the strategic externality of the game; see the Supplemental appendix.} In Section \ref{sec:equilibrium}, I characterize the equilibria using the intuitions from Section \ref{sec:twoplayers}.

	\section{Two players}\label{sec:twoplayers}
	
	This section considers the case with $n=2$. This case highlights the role of the covariance structure. For simplicity, we assume that $g\coloneqq g_{12} = g_{21}$, and that $X(p_0)-k > d_1 > d_2 + 2gk$. The latter restriction guarantees that player 1 is closer to the status quo than player 2 in equilibrium, because: the favorite outcome $d_1$ is sufficiently closer to the status quo than $d_2$, and equilibrium policies are interior.

	In the game $\Gamma^0 $ without uncertainty, the best response of player $i$ to $p_j$ is the policy $p_i$ such that
	\begin{align}
		E(X(p_i)) = d_i + g E(X(p_j)).
		\label{eq:2playercert}
	\end{align}
	\begin{figure}[tp]
		\centering
		\subfloat[The equilibrium in $\Gamma ^0$ is the intersection point of the dotted best responses ($\operatorname {BR}_i^0$); the equilibrium in $\Gamma ^1$ is the intersection of the solid best responses ($\operatorname {BR}_i^1$).\protect\label{fig:twoplayer_nouncertain}]{
			\begin{tikzpicture}
				\begin{axis}[
					declare function={
						xnaught=4;
						k=0.6;
						m=-2;
						g=0.3333;
						done=2;
						dtwo=1;
						BRnaughtone(\x)  = done/m - (xnaught*(1-g))/m + g*\x;
						BRoneone(\x)     = done/m - (xnaught*(1-g))/m + k/m + g*\x;
						BRone(\x)  = done/m - (xnaught*(1-g))/m + k*(1-2*g)/m + g*\x;
						BRoneoneIntersect45 =	( done + k - xnaught*(1-g) ) / ( m*(1-g) );
						BRoneIntersect45 =	( done + k*(1 - 2*g) - xnaught*(1-g) ) / ( m*(1-g) );
						InverseBRnaughttwo(\x)  = (\x - (dtwo/m - (xnaught*(1-g))/m)) / g;
						InverseBRonetwo(\x)     = (\x - (dtwo/m - (xnaught*(1-g))/m + k/m)) / g;
						InversealtBRonetwo(\x)  = (\x - (dtwo/m - (xnaught*(1-g))/m + k*(1-2*g)/m)) / g; 
						InverseBRonetwoIntersect45 =	( dtwo + k - xnaught*(1-g) ) / ( m*(1-g) );
						InversealtBRonetwoIntersect45 =	( dtwo + k*(1-2*g) - xnaught*(1-g) ) / ( m*(1-g) );
						x1eq0 = (done + g*dtwo - xnaught*(1 - g*g)) / (m*(1 - g*g));
						x2eq0 = (dtwo + g*done - xnaught*(1 - g*g)) / (m*(1 - g*g));
						x1eq1 = (done + g*dtwo + k*(1+g) - xnaught*(1 - g*g)) / (m*(1 - g*g));
						x2eq1 = ( dtwo + g*done + k*(1+g) - xnaught*(1 - g*g) ) / ( m*(1 - g*g) );
						x1eq = (done + g*dtwo + k*(1-g) - xnaught*(1 - g*g)) / (m*(1 - g*g));
						x2eq = ( dtwo + g*done + k*(1 + g - 2*g*g) - xnaught*(1 - g*g) ) / ( m*(1 - g*g) );
					},
					ylabel={ policy ($p_1$)},
					xlabel={ policy ($p_2$)},
					xtick={x2eq1, x2eq0},
					xticklabels = { $p_2^1$,  $p_2^0$},
					ytick = {x1eq1, x1eq0},
					yticklabels = { $p_1^1$, $p_1^0$},
					xmax=1.3,
					xmin=0,
					ymax=1.3,
					ymin=0,
					legend pos=north west,
					]
					\addplot[domain=0:4, blue,  very thick, densely dotted] {max(0.01, BRnaughtone(\x))};\addlegendentry{$\operatorname{BR}^0_1$};
					\addplot[domain=0:4, red,  very thick, densely dotted]{InverseBRnaughttwo(\x)}; \addlegendentry{$(\operatorname{BR}_2^0)^{-1}$};
					\addplot[domain=0:4, blue,ultra  thick] {max(0.01, BRoneone(\x))};\addlegendentry{$\operatorname{BR}_1^1$};
					\addplot[domain=0:4, red,ultra  thick] {InverseBRonetwo(\x)};\addlegendentry{$(\operatorname{BR}_2^1)^{-1}$};
					\addplot[domain=0:x2eq0, blue, thick, dotted] {x1eq0};
					\draw[red, thick, dotted] (axis cs: x2eq0, 0) -- (axis cs: x2eq0, x1eq0); 
					\addplot[domain=0:x2eq1, blue, thick, dashed] {x1eq1};
					\draw[red, thick, dashed ] (axis cs: x2eq1, 0) -- (axis cs: x2eq1, x1eq1); 
				\end{axis}
			\end{tikzpicture}
		} \hspace{0.25em} 
		\subfloat[In the equilibrium $\bm p^1$ of $\Gamma ^1$, players are closer to the status quo that in the equilibrium $\bm p^0$ of $\Gamma ^0$. The introduction of pure noise shifts best responses by the same amount, so the distance between players is the same.\protect\label{fig:twoplayer_nouncertain_eq}]{
			\begin{tikzpicture}[scale=0.9]
				\begin{axis}[
					declare function={
						xnaught=4;
						k=0.75;
						m=-2;
						g=0.3333;
						done=2;
						dtwo=1;
						BRnaughtone(\x)  = done/m - (xnaught*(1-g))/m + g*\x;
						BRoneone(\x)     = done/m - (xnaught*(1-g))/m + k/m + g*\x;
						BRone(\x)  = done/m - (xnaught*(1-g))/m + k*(1-2*g)/m + g*\x; 
						InverseBRnaughttwo(\x)  = (\x - (dtwo/m - (xnaught*(1-g))/m)) / g;
						InverseBRonetwo(\x)     = (\x - (dtwo/m - (xnaught*(1-g))/m + k/m)) / g;
						InversealtBRonetwo(\x)  = (\x - (dtwo/m - (xnaught*(1-g))/m + k*(1-2*g)/m)) / g; 
						x1eq0 = (done + g*dtwo - xnaught*(1 - g*g)) / (m*(1 - g*g));
						x2eq0 = (dtwo + g*done - xnaught*(1 - g*g)) / (m*(1 - g*g));
						x1eq1 = (done + g*dtwo + k*(1+g) - xnaught*(1 - g*g)) / (m*(1 - g*g));
						x2eq1 = ( dtwo + g*done + k*(1+g) - xnaught*(1 - g*g) ) / ( m*(1 - g*g) );
						x1eq = (done + g*dtwo + k*(1-g) - xnaught*(1 - g*g)) / (m*(1 - g*g));
						x2eq = ( dtwo + g*done + k*(1 + g - 2*g*g) - xnaught*(1 - g*g) ) / ( m*(1 - g*g) );
						Ex1eq0 = xnaught + m*x1eq0;
						Ex2eq0 = xnaught + m*x2eq0;
						Ex1eq1 = xnaught + m*x1eq1;
						Ex2eq1 = xnaught + m*x2eq1;
						Ex1eq = xnaught + m*x1eq;
						Ex2eq = xnaught + m*x2eq;
					},
					ylabel={ outcome ($x_i$)},
					xlabel={ policy ($p_i$)},
					xtick={x2eq1, x2eq0, x1eq1, x1eq0},
					xticklabels = { $p_2^1$,  $p_2^0$,  $p_1^1$,  $p_1^0$},
					ytick = {Ex1eq1, Ex1eq0, Ex2eq1, Ex2eq0},
					yticklabels = { $E(X(p_1^1))$,  $E(X(p_1^0))$,  $E(X(p_2^1))$,  $E(X(p_2^0))$},
					xmax=1.3,
					xmin=0,
					ymax=4.3,
					ymin=1.8,
					legend pos=north west,
					]
					\addplot[domain=0:4, black,ultra  thick] {xnaught +m*x}; \addlegendentry{$E(X(\cdot))$};
					\addplot[domain=0:x1eq0, blue, very thick,  dotted] {Ex1eq0};
					\draw[blue, very thick,   dotted] (axis cs: x1eq0 , 0) -- (axis cs:x1eq0,Ex1eq0); 
					\addplot[domain=0:x2eq0, red,  very thick,   dotted] {Ex2eq0};
					\draw[red,  very thick,   dotted] (axis cs: x2eq0 , 0) -- (axis cs:x2eq0,Ex2eq0); 
					\addplot[domain=0:x1eq1, blue,  very thick, dashed] {Ex1eq1};
					\draw[blue,  very thick, dashed] (axis cs: x1eq1, 0) -- (axis cs:x1eq1,Ex1eq1); 
					\addplot[domain=0:x2eq1, red,  very thick, dashed] {Ex2eq1};
					\draw[red,  very thick, dashed] (axis cs: x2eq1, 0) -- (axis cs:x2eq1,Ex2eq1); 
				\end{axis}
			\end{tikzpicture}
		}
		\caption{Panel (a) illustrates the best responses in the game without uncertainty $\Gamma ^0$ and in the game with pure noise $\Gamma ^1$, panel (b) illustrates the equilibria.}\protect\label{fig:twoplayer_nouncertainboth}
	\end{figure}
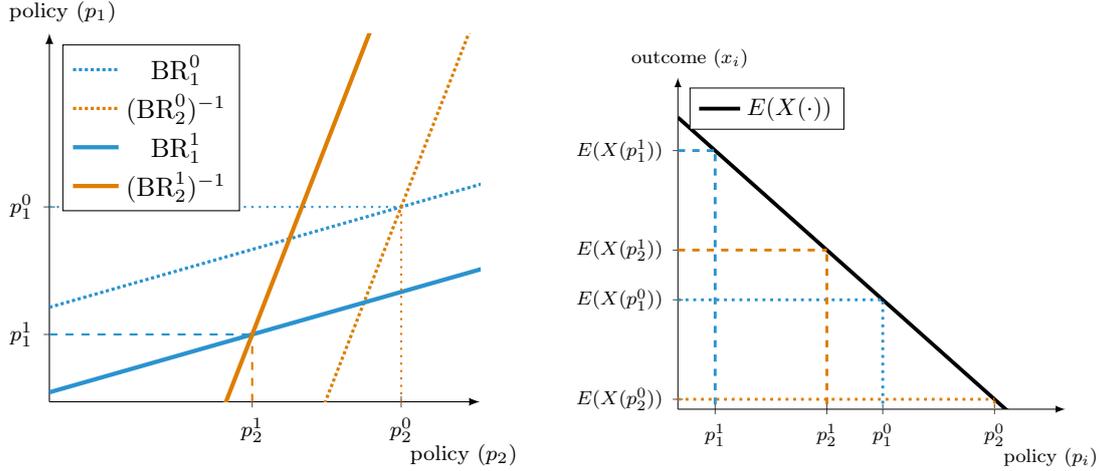
	The equilibrium $\bm p^0$ is determined by the centralities as the unique policy profile $\bm p$ such that $E(\bm X(\bm p)) = (\bm I - \bm G)^{-1}\bm d$. The equilibrium distance between players is captured by $D\coloneqq E(X(p_1^0))-E(X(p_2^0))$.
	
	For the sake of intuition, we consider an intermediate scenario between the no-uncertainty case in $\Gamma ^0$ and the game $\Gamma$. We consider the game $\Gamma ^1$ with ``pure noise:'' the outcome $X_1(p_1)$ is independent of the outcome $X_2(p_2)$ for every policy $p_1$ of player 1 and $p_2$ of player 2. Additionally, the variance and mean of outcomes are the same as in $\Gamma$. Formally, this structure of uncertainty arises from player-specific independent Brownian motions, and is considered in the Supplemental appendix. The best response of player $i$ to $p_j$ is the policy $p_i$ such that
	\begin{align}
		E(X_i(p_i)) = d_i + gE(X_j(p_j)) +k. \label{eq:2playernoise}
	\end{align}
	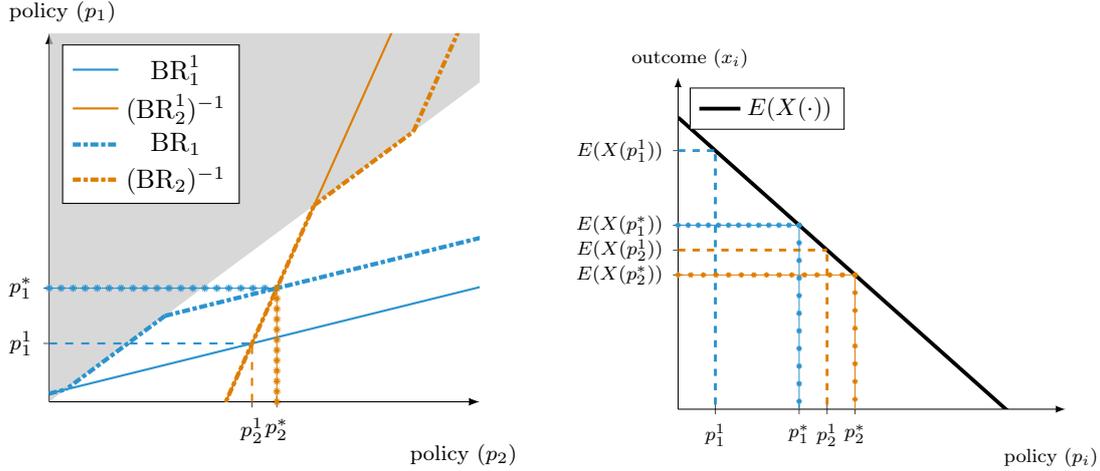
\begin{figure}[tp]
		\centering
		\subfloat[The equilibrium in $\Gamma ^1$ is the intersection point of the solid best responses, $\operatorname {BR}_i^1$, the equilibrium in the principal game $\Gamma$ if the intersection of the dash-dotted best responses, $\operatorname {BR}_i$. In the gray region, best responses take into account that player 1 is the leader.\protect\label{fig:twoplayer}]{
			\begin{tikzpicture}
				\begin{axis}[
					ylabel={ policy ($p_1$)},
					xlabel={ policy ($p_2$)},
					declare function={
						xnaught=4;
						k=0.6;
						m=-2;
						g=0.3333;
						done=2;
						dtwo=1;
						BRnaughtone(\x)  = done/m - (xnaught*(1-g))/m + g*\x;
						BRoneone(\x)     = done/m - (xnaught*(1-g))/m + k/m + g*\x;
						BRone(\x)  = done/m - (xnaught*(1-g))/m + k*(1-2*g)/m + g*\x; 
						BRoneoneIntersect45 =	( done + k - xnaught*(1-g) ) / ( m*(1-g) );
						BRoneIntersect45 =	( done + k*(1 - 2*g) - xnaught*(1-g) ) / ( m*(1-g) );
						InverseBRnaughttwo(\x)  = (\x - (dtwo/m - (xnaught*(1-g))/m)) / g;
						InverseBRonetwo(\x)     = (\x - (dtwo/m - (xnaught*(1-g))/m + k/m)) / g;
						InversealtBRonetwo(\x)  = (\x - (dtwo/m - (xnaught*(1-g))/m + k*(1-2*g)/m)) / g; 
						InverseBRonetwoIntersect45 =	( dtwo + k - xnaught*(1-g) ) / ( m*(1-g) );
						InversealtBRonetwoIntersect45 =	( dtwo + k*(1-2*g) - xnaught*(1-g) ) / ( m*(1-g) );
						x1eq0 = (done + g*dtwo - xnaught*(1 - g*g)) / (m*(1 - g*g));
						x2eq0 = (dtwo + g*done - xnaught*(1 - g*g)) / (m*(1 - g*g));
						x1eq1 = (done + g*dtwo + k*(1+g) - xnaught*(1 - g*g)) / (m*(1 - g*g));
						x2eq1 = ( dtwo + g*done + k*(1+g) - xnaught*(1 - g*g) ) / ( m*(1 - g*g) );
						x1eq = (done + g*dtwo + k*(1-g) - xnaught*(1 - g*g)) / (m*(1 - g*g));
						x2eq = ( dtwo + g*done + k*(1 + g - 2*g*g) - xnaught*(1 - g*g) ) / ( m*(1 - g*g) );
					},
					xtick={x2eq1, x2eq},
					xticklabels = { $p_2^1$,  $p_2^*$},
					ytick = {x1eq1, x1eq},
					yticklabels = { $p_1^1$,  $p_1^*$},
					xmax=1.3,
					xmin=0,
					ymax=1.5,
					ymin=0,
					]
					\addplot[domain=0:4, blue,   thick, solid] {max(0.01, BRoneone(\x))};\addlegendentry{ $\operatorname{BR}_1^1$};
					\addplot[domain=0:4, red,   thick, solid] {InverseBRonetwo(\x)};\addlegendentry{ $(\operatorname{BR}_2^1)^{-1}$};
					\addplot[domain=BRoneIntersect45:4, blue, ultra  thick, densely dashdotted] {max(0.01, BRone(\x))};\addlegendentry{$\operatorname{BR}_1$};
					\addplot[domain=0:InverseBRonetwoIntersect45, red,  ultra thick, densely dashdotted] {InverseBRonetwo(\x)};\addlegendentry{$(\operatorname{BR}_2)^{-1}$};
					\addplot[domain=0:x2eq1, blue, thick, dashed] {x1eq1};
					\draw[red,thick, dashed] (axis cs: x2eq1, 0) -- (axis cs: x2eq1, x1eq1); 
					\addplot[domain=0:x2eq, blue,  thin, mark=10-pointed star,  mark size=1.4pt, mark options={fill=blue, line width=0.26pt}, samples=20] {x1eq};
					\addplot[red,  thin, mark=10-pointed star, mark size=1.4pt, mark options={fill=red, line width=0.26pt}, samples=12] plot[domain=0:1] (x2eq, {x1eq * x});
					\draw[blue, ultra thick, densely dashdotted] (axis cs: BRoneIntersect45, BRoneIntersect45) -- (axis cs: BRoneoneIntersect45, BRoneoneIntersect45); 
					\addplot[domain=0:BRoneoneIntersect45, blue, ultra  thick, densely dashdotted] {max(0.01, BRoneone(\x))};
					\draw[red,  ultra thick, densely dashdotted] (axis cs: InverseBRonetwoIntersect45, InverseBRonetwoIntersect45) -- (axis cs: InversealtBRonetwoIntersect45, InversealtBRonetwoIntersect45); 
					\addplot[domain=InversealtBRonetwoIntersect45:4, red,  ultra thick, densely dashdotted] {InversealtBRonetwo(\x)};
					\addplot[domain=0:4,name path=diag, white, opacity=0] {x};
					\addplot[domain=0:4, name path=upper, opacity = 0] {1.5}; 
					\addplot[gray!30, opacity=0.7] fill between[of=diag and upper];
				\end{axis}
			\end{tikzpicture}
		}\hspace{0.25em}
		\subfloat[In the equilibrium $\bm p^*$ of $\Gamma $, players are more distant from the status quo that in the equilibrium $\bm p^1$ of $\Gamma ^1$. Correlation shifts only the best response of the follower; this asymmetry implies that the distance between the players is different in the two case.\protect\label{fig:twoplayer_eq}]{
			\begin{tikzpicture}[scale=0.9]
				\begin{axis}[
					declare function={
						xnaught=4;
						k=0.75;
						m=-2;
						g=0.3333;
						done=2;
						dtwo=1;
						BRnaughtone(\x)  = done/m - (xnaught*(1-g))/m + g*\x;
						BRoneone(\x)     = done/m - (xnaught*(1-g))/m + k/m + g*\x;
						BRone(\x)  = done/m - (xnaught*(1-g))/m + k*(1-2*g)/m + g*\x; 
						InverseBRnaughttwo(\x)  = (\x - (dtwo/m - (xnaught*(1-g))/m)) / g;
						InverseBRonetwo(\x)     = (\x - (dtwo/m - (xnaught*(1-g))/m + k/m)) / g;
						InversealtBRonetwo(\x)  = (\x - (dtwo/m - (xnaught*(1-g))/m + k*(1-2*g)/m)) / g; 
						x1eq0 = (done + g*dtwo - xnaught*(1 - g*g)) / (m*(1 - g*g));
						x2eq0 = (dtwo + g*done - xnaught*(1 - g*g)) / (m*(1 - g*g));
						x1eq1 = (done + g*dtwo + k*(1+g) - xnaught*(1 - g*g)) / (m*(1 - g*g));
						x2eq1 = ( dtwo + g*done + k*(1+g) - xnaught*(1 - g*g) ) / ( m*(1 - g*g) );
						x1eq = (done + g*dtwo + k*(1-g) - xnaught*(1 - g*g)) / (m*(1 - g*g));
						x2eq = ( dtwo + g*done + k*(1 + g - 2*g*g) - xnaught*(1 - g*g) ) / ( m*(1 - g*g) );
						Ex1eq0 = xnaught + m*x1eq0;
						Ex2eq0 = xnaught + m*x2eq0;
						Ex1eq1 = xnaught + m*x1eq1;
						Ex2eq1 = xnaught + m*x2eq1;
						Ex1eq = xnaught + m*x1eq;
						Ex2eq = xnaught + m*x2eq;
					},
					ylabel={ outcome ($x_i$)},
					xlabel={ policy ($p_i$)},
					xtick={x2eq1, x2eq, x1eq1, x1eq},
					xticklabels = { $p_2^1$,  $p_2^*$,  $p_1^1$,  $p_1^*$},
					ytick = {Ex1eq1, Ex1eq, Ex2eq1, Ex2eq},
					yticklabels = { $E(X(p_1^1))$,  $E(X(p_1^*))$, $E(X(p_2^1))$,  $E(X(p_2^*))$},
					xmax=1.3,
					xmin=0,
					ymax=4.3,
					ymin=1.8,
					legend pos=north west,
					]
					\addplot[domain=0:4, black,ultra  thick] {xnaught +m*x};\addlegendentry{$E(X(\cdot))$};
					\addplot[domain=0:x1eq1, blue, very thick, dashed] {Ex1eq1};
					\draw[blue,  very thick, dashed] (axis cs: x1eq1, 0) -- (axis cs:x1eq1,Ex1eq1); 
					\addplot[domain=0:x2eq1, red, very thick, dashed] {Ex2eq1};
					\draw[red, very thick, dashed] (axis cs: x2eq1, 0) -- (axis cs:x2eq1,Ex2eq1); 
					\addplot[domain=0:x1eq, blue,  very thin, mark=10-pointed star, mark size=1.2pt, mark options={fill=blue, line width=0.25pt}, samples=13] {Ex1eq};
					\addplot[blue, very thin, mark=10-pointed star, mark size=1.2pt, mark options={fill=red, line width=0.25pt}, samples=25]  plot[domain=0:1] (x1eq, {Ex1eq * x});
					\addplot[domain=0:x2eq, red, very thin, mark=10-pointed star, mark size=1.2pt, mark options={fill=blue, line width=0.25pt}, samples=15] {Ex2eq};
					\addplot[red, very thin, mark=10-pointed star, mark size=1.2pt, mark options={fill=red, line width=0.25pt}, samples=22]  plot[domain=0:1] (x2eq, {Ex2eq * x});
				\end{axis}
			\end{tikzpicture}
		}
		\caption{Panel (a) illustrates the best responses in the game with pure noise $\Gamma ^1$ and in the principal game $\Gamma$ --- the border between the gray and white regions is the main diagonal; panel (b) illustrates the equilibria.}.\protect\label{fig:twoplayer_both}
	\end{figure}
	Best responses shift inwards, with respect to the noiseless case, because of the mean-variance tradeoff introduced by the variance of outcomes. The game $\Gamma ^1$ is of strategic complements, because the utility function $u_i$ exhibits increasing differences in the own outcome $x_i$ and the outcome of an opponent $x_j$. Specifically, the game without uncertainty $\Gamma^0$ is a game of strategic complements, and the game $\Gamma ^1$ simply adds a cost of moving away from $p_0$ to each payoff function, without generating externalities. Therefore, the equilibrium involves lower policies than the case of no uncertainty. Because the best responses are linear, the equilibrium $\bm p^1$ is determined as the unique policy profile $\bm p$ such that $E(\bm X(\bm p))  = (\bm I - \bm G)^{-1}(\bm d+k\bm 1)$. By direct computation, the distance between players is unchanged: $D = E(X(p_1^1))-E(X(p_2^1)) $.

	The equilibrium in  $\Gamma^1$  features a magnified status-quo bias. The interaction of noise and coordination motives makes players closer to the status quo than in $\Gamma ^0$. The status-quo bias is magnified because:  the equilibrium expected outcome $E(X(p_i^1))$ is obtained by adding the positive amount $\frac{1}{1-g}k$ to the no-uncertainty outcome $E(X(p_i^0))$, that is
	\begin{align*}
	E(X(p_i^1))	 = 	E(X(p_i^0)) + \frac{1}{1-g}k.
	\end{align*}
	In the new term $\frac{1}{1-g}k$, the social multiplier $\frac{1}{1-g}$ \citep{jackson_games_2015} multiplies the status-quo bias $k$ \citep{callander_searching_2011}.

	In the game $\Gamma$, the outcomes are correlated across players due to the common outcome function $X$. By the independent-increment property, the covariance between outcomes is given by $\cov (X(p_1), X(p_2)) = \min_i  \var (X(p_i))$. So player 2 has no control over the covariance locally to the policy profile $\bm p^1$. Hence,  the  best response of player 2 to $p_1^1$ is unchanged from the case of pure noise (equation \ref{eq:2playernoise}). In contrast, player 1 has an extra exploration motive due to her control over the covariance, so the best response to $p_2^1$ is the policy $p_1$ such that
		\begin{align}
		E(X_1(p_1)) = d_1 + gE(X_2(p_2^1)) +k - 2gk.
			\label{eq:2playergame}
	\end{align}
	The best response of player 1 shifts upwards by the introduction of the covariance in outcomes across players: the new term  $- 2gk$ reflects an exploration motive. The best-response shift depends on coordination motives and complexity. For fixed $k$, the larger $g$, the more important the exploration motive relative to the status-quo bias. Hence, the exploration motive of player 1 interacts with the strategic complementarity of the game. In equilibrium, both players are farther away from the status quo than with independent noise, and the equilibrium $\bm p^*$ is given by  the unique policy profile $\bm p$ such that
	\begin{align*}
		E(\bm X(\bm p)) = (\bm I - \bm G)^{-1} \Bigg ( \bm d + \bm 1 k - \begin{pmatrix}
			2gk\\0
		\end{pmatrix} \Bigg  )
	\end{align*}
	Hence, the exploration motive of player 1 acts against the magnified status-quo bias that is due to pure noise. In fact, we have
	\begin{align*}
		E(X(p_i^*) ) = \begin{cases}
			E(X(p_i^1))	 -2\frac{g^2}{1-g^2}k\Big (1+[i=1]\frac{1-g}{g}\Big ),
		\end{cases}
	\end{align*}
	so both players are farther away from the status quo than with noise.\footnote{The Iverson bracket $[i=1]$ is 1 if $i=1$ and $0$ otherwise.}
	 Because correlation directly affects the follower's incentives, the follower's exploration motive is stronger than the leader's one.

	Importantly, using the formula in \ref{eq:2playergame} for the best response of player 1, and \ref{eq:2playernoise} for player $2$, presumes that player 1 is closer to the status quo than player 2 in equilibrium.  In what follows, if the policy of player $i$ is higher than the policy of player $j$, we refer to $i$ as the  \emph{leader} and to player $j$ as the \emph{follower}. In general, the leader-follower structure is determined in equilibrium, so best responses are not globally linear.
	
	In equilibrium, the two players are closer together than in the previous cases. In particular, the equilibrium distance between players is $E(X(p_1^*))-E(X(p_2^*))=D-2\frac{g}{1+g}k$, so that
	\begin{align*}
		E(X( p^*_1)) -E( X(p^*_2)) <	E(X( p^1_1)) -E( X(p^1_2)) =	E(X( p^0_1)) -E( X(p^0_2)).
	\end{align*}
	Conformity increases in the complexity of the environment. The intuition for the comparative statics follows from the leader-follower relationship: matching the outcome of a leader becomes more ``cost-effective'' for a follower, relative to targeting a favorite outcome, as complexity increases. This comparative statics is consistent with findings in social psychology. Since \citet{asch_effects_1951}, psychologists observe that conformity ``is far greater on difficult items than on easy ones.'' The ``difficulty'' is typically obtained by asking experimental subjects about their ``certainty of judgement'' \citep{krech_individual_1962}.

	\section{Equilibrium characterization}\label{sec:equilibrium}
	In this section, we characterize the equilibria of the game. 
	
	We define a matrix that keeps track of which player is a follower in every pair of players. Given a policy profile $\bm p$, a matrix $\bm F \in [0,1]^{n\times n}$ is a \emph{$\bm p$-followership} if: $p_i<p_j$ implies $f_{ij}=1$ and $f_{ji}=0$. The definition does not impose requirement on a followership if $p_i=p_j$, and the multiplicity of values of $f_{ij}$ consistent with the definition allows to characterize the equilibrium set. (The symbol $\circ$ denotes entry-wise multiplication.)

	\begin{proposition} \label{prop:equilibrium}
		The policy profile $\bm p$ is an equilibrium if and only if there exists a $\bm p$-followership $\bm F$ such that:
	\begin{align*}
		E(\bm X(\bm p))\le \bm M(\bm d +k(\bm I-2\bm G \circ \bm F)\bm 1),
	\end{align*}
	and, for all $i\in I$, $p_i>p_0$ only if $E( X_i( p_i))=\sum_{j\in I}m_{ij}(d_j +k-2k\sum_{\ell\in I} g_{j\ell }f_{j\ell})$.
\end{proposition}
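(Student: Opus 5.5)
The plan is to reduce the equilibrium conditions to first-order conditions, using that each $U_i$ is concave in the own policy. Throughout I read $\bm M$ as $(\bm I-\bm G)^{-1}$.

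\textbf{Step 1: expected payoff as a function of $p_i$.} Write $e_j\coloneqq E(X(p_j))=X(p_0)+\mu(p_j-p_0)$ and $T_i\coloneqq d_i+\sum_j g_{ij}e_j$. Since $E(Y^2)=(EY)^2+\var(Y)$,
\begin{align*}
U_i(\bm p)=-(T_i-e_i)^2-\var(X(p_i))+2\sum_{j}g_{ij}\cov(X(p_i),X(p_j))+C_i(\bm p_{-i}),
\end{align*}
where $C_i$ collects terms not involving $p_i$. Substituting $\var(X(p_i))=\sigma^2(p_i-p_0)$ and $\cov=\sigma^2(\min\{p_i,p_j\}-p_0)$ gives the following. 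As a function of $p_i$, $U_i$ is a concave quadratic, minus a linear term, plus nonnegative multiples of the concave functions $p_i\mapsto\min\{p_i,p_j\}$. Hence $U_i(\cdot,\bm p_{-i})$ is concave and piecewise quadratic, with kinks exactly at the opponents' policies. So $p_i$ is a best response if and only if the left derivative is $\ge 0$ (when $p_i>p_0$) and the right derivative is $\le 0$.

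\textbf{Step 2: one-sided derivatives and their translation into $F$.} The right derivative is
\begin{align*}
2|\mu|(e_i-T_i)-\sigma^2\Big(1-2\sum_{j:\,p_j>p_i}g_{ij}\Big).
\end{align*}
The left derivative is the same expression with the sum taken over $\{j:p_j\ge p_i\}$. Dividing by $2|\mu|$ and using $k=\sigma^2/2|\mu|$ yields the two conditions:
\begin{itemize}
\item Right derivative $\le 0$ if and only if $e_i\le T_i+k\bigl(1-2\sum_{p_j>p_i}g_{ij}\bigr)$.
\item Left derivative $\ge 0$ if and only if $e_i\ge T_i+k\bigl(1-2\sum_{p_j\ge p_i}g_{ij}\bigr)$.
\end{itemize}
Because $g_{ij}\ge0$, the map $(f_{ij})_{j:\,p_j=p_i}\in[0,1]\mapsto T_i+k\bigl(1-2\sum_{p_j>p_i}g_{ij}-2\sum_{p_j=p_i}g_{ij}f_{ij}\bigr)$ is continuous and sweeps exactly the interval between these two bounds. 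Therefore:
\begin{itemize}
\item An interior $p_i$ is optimal if and only if there exist tie values $f_{ij}$ giving $e_i=T_i+k\bigl(1-2\sum_j g_{ij}f_{ij}\bigr)$, with $f_{ij}=1$ whenever $p_j>p_i$ and $f_{ij}=0$ whenever $p_j<p_i$. This is precisely a $\bm p$-followership row.
\item $p_i=p_0$ is optimal if and only if the weak inequality holds for some (equivalently, for the minimal) admissible choice of ties.
\end{itemize}
Each row of $\bm F$ involves only player $i$'s own conditions, so the rows can be chosen independently. This yields the componentwise characterization
\begin{align*}
(\bm I-\bm G)e\le \bm d+k(\bm I-2\bm G\circ\bm F)\bm 1, \quad\text{with equality in row $i$ whenever } p_i>p_0.
\end{align*}

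\textbf{Step 3 (main obstacle): passing to the stated form with $\bm M$.} It remains to multiply through by $\bm M$. Write the system as $(\bm I-\bm G)e=c-s$, where $c\coloneqq\bm d+k(\bm I-2\bm G\circ\bm F)\bm 1$, $s\ge0$, and $s_i=0$ for interior $i$. Then $e=\bm Mc-\bm Ms$. The inequality $e\le\bm Mc$ follows if $\bm M\ge0$ entrywise, which I would obtain from the Neumann series $\bm M=\sum_t\bm G^t$ under the maintained bound on coordination motives. The equality $e_i=(\bm Mc)_i$ for interior $i$ additionally requires $(\bm Ms)_i=0$, which does not hold automatically.

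To get it, I would try to absorb the slack $s$ into the followership. All players at the status quo are tied at $p_0$, so their entries $f_{j\ell}$ with $p_\ell=p_0$ are free in $[0,1]$. For the remaining slack, I would rewrite the status-quo rows using $e_j=X(p_0)$ exactly, treat those players as fixed, and reduce the system to the interior players' subsystem. The point to verify carefully is that the paper's $\bm M$-form is consistent with this reduction, or that residual slack is excluded by the standing assumptions.

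For the converse direction, I would run the same steps backward. Given $\bm F$ and the stated conditions, recover the row-wise conditions and invoke concavity from Step 1 to conclude that each $p_i$ is a best response.
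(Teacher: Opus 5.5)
Your Steps 1--2 are correct and coincide with the paper's supplemental lemmas: concavity of $U_i$ in $p_i$, and the row-wise best-response condition with ties filled in by the followership. Together they prove, in both directions, that $\bm p$ is an equilibrium if and only if some $\bm p$-followership $\bm F$ satisfies $(\bm I-\bm G)E(\bm X(\bm p))\le \bm d+k(\bm I-2\bm G\circ\bm F)\bm 1$, with equality in every row $i$ with $p_i>p_0$.

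The gap is Step 3, and it cannot be closed. The paper also passes to the $\bm M$-form with a bare ``by Lemma'', but that form is wrong in general once some player sits at the status quo with strict slack. Neither of your remedies helps:
\begin{itemize}
\item The slack of a status-quo player can be arbitrarily large. Re-choosing tie entries moves $c_i$ by at most $2k\sum_{j:\,p_j=p_i}g_{ij}$, and with a strict order there are no ties at all.
\item Reducing to the interior players gives a correct formula. But it involves the inverse of the interior block of $\bm I-\bm G$ plus a term $X(p_0)\sum_{j:\,p_j=p_0}g_{ij}$, not $\bm M$.
\end{itemize}
No standing assumption excludes such profiles.

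Here is a concrete case. Take $n=2$, $g_{12}=g_{21}=\tfrac12$, $X(p_0)=0$, $\sigma^2=2|\mu|$ (so $k=1$), $d_1=2$, $d_2=-3$. The profile with $p_1=p_0$ and $E(X(p_2))=-2$ is an equilibrium:
\begin{itemize}
\item Player 2's derivative is $2|\mu|(-2+3)-\sigma^2=0$.
\item Player 1's right derivative at $p_0$ is $2|\mu|\bigl(0-(2-1)\bigr)-\sigma^2(1-2\cdot\tfrac12)<0$.
\end{itemize}
Since $p_1<p_2$, the followership is forced ($f_{12}=1$, $f_{21}=0$), so $\bm d+k(\bm I-2\bm G\circ\bm F)\bm 1=(2,-2)'$. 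Its image under $\bm M=\tfrac43\bigl(\begin{smallmatrix}1&1/2\\1/2&1\end{smallmatrix}\bigr)$ has second entry $-\tfrac43\neq-2$, so the ``only if'' direction fails.

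Conversely, $p_1=p_0$ with $E(X(p_2))=-\tfrac43$ satisfies the $\bm M$-conditions. Yet player 2 strictly gains by raising $p_2$, so the ``if'' direction fails too.

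What you can prove is the row-wise statement above. The $\bm M$-form also holds for profiles with all $p_i>p_0$: then $s=0$, and multiplying by $\bm M$ is an equivalence with no sign condition on $\bm M$. That is the case the paper actually uses afterwards. Separately, $\bm M\ge0$ via the Neumann series needs $\rho(\bm G)<1$, which the model does not assume; it only assumes that $\bm I-\bm G$ is invertible.
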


	For intuition, consider an interior equilibrium $\bm p$. Without any correlation among the outcomes of players, the formula in the proposition reduces to $ E(\bm X(\bm p))= \bm M(\bm d +k\bm 1)$, in which the coordination motives interact with the status-quo bias that arises from the variance in individual outcomes. This formula characterizes the unique equilibrium in terms of the parameters of the game, and extends the analysis of ``pure noise'' in Section \ref{sec:twoplayers}.

	The covariance structure introduces a new term, captured by the vector $\bm v \coloneqq (\bm G \circ \bm F) \bm 1$. This vector collects ``average'' followership roles of players in the policy profile $\bm p$. In particular, $v_i=\sum_{\ell\in I} g_{i\ell }f_{i\ell}$, in which $f_{i\ell}$ records whether $i$ is a follower in the pair of players $(i, \ell )$, and $g_{i\ell }$ scales the followership by how much $i$ cares about coordinating with $\ell $. Importantly, $\bm v$ is constructed from a $\bm p$-followership, so it is not a parameter of the model: the relative position of the policies is determined in equilibrium, for all player pairs. Put differently, the equilibrium equation $ E(\bm X(\bm p))= \bm M(\bm d + k(\bm 1 -2\bm v))$ is a fixed-point condition.

	In a uniform network, a natural guess for the followership supporting an equilibrium is given by making $i$ a follower of $j$ if and only if favorite outcomes are ranked this way. This condition holds for every equilibrium with this network, which generalizes the observations in Section \ref{sec:twoplayers}.

	\begin{corollary}\label{cor:symmetriceq}
		Let $g\ge 0$ be such that  $g_{ij}=g$ for all $i$ and $j\ne i$, and  $\bm p\in(0, \infty)^n$ be an equilibrium. For all players $i$ and $j$,  $p_i<p_j$ implies
		 \begin{align*}
			E (X_i(p_i) - X_j(p_j)) < \beta_i - \beta_j.
		\end{align*}
		Moreover, if $p_1<p_2<\dots$, then
		\begin{align*}
				E (X_i(p_i) - X_{i+1}(p_{i+1}))  =  \beta_i - \beta_{i+1} -2 \frac{g}{1+g }k.
		\end{align*}
	\end{corollary}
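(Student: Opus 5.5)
The plan is to apply Proposition \ref{prop:equilibrium} to the interior equilibrium $\bm p$ and then use the special structure of the uniform network to compute differences of expected outcomes. Throughout, $\bm M=(\bm I-\bm G)^{-1}$ and $\bm\beta=\bm M\bm d$. I read $\bm p\in(0,\infty)^n$ as $p_i>p_0$ for all $i$ (normalizing $p_0=0$). Then the "only if" clause of the proposition applies to every player, so there is a $\bm p$-followership $\bm F$ with
\begin{align*}
E(\bm X(\bm p))=\bm M\big(\bm d+k\bm 1-2k\bm v\big),\qquad \bm v=(\bm G\circ\bm F)\bm 1 .
\end{align*}

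First I compute $\bm M$. With $\bm J$ the all-ones matrix, $\bm G=g(\bm J-\bm I)$, so $\bm I-\bm G=(1+g)\bm I-g\bm J$. Using $\bm J^2=n\bm J$, one checks that
\begin{align*}
\bm M=\frac{1}{1+g}\Big(\bm I+\frac{g}{1+g-ng}\bm J\Big),
\end{align*}
where $1+g-ng\neq0$ by the assumed invertibility of $\bm I-\bm G$. Hence for any vector $\bm y$, $(\bm M\bm y)_i-(\bm M\bm y)_j=(y_i-y_j)/(1+g)$, because the $\bm J$ part adds the same scalar to every coordinate. Applying this to both $E(\bm X(\bm p))$ and $\bm\beta$, the $k\bm 1$ term cancels and
\begin{align*}
E(X_i(p_i)-X_j(p_j))=\beta_i-\beta_j-\frac{2k}{1+g}\,(v_i-v_j).
\end{align*}

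Second, I bound $v_i-v_j$ when $p_i<p_j$. Write $v_i=g\sum_{\ell\neq i}f_{i\ell}$ and compare term by term over the other players.
\begin{itemize}
\item For the pair itself, $f_{ij}=1$ and $f_{ji}=0$, which contributes exactly $g$.
\item For $\ell\notin\{i,j\}$, I check $f_{i\ell}\ge f_{j\ell}$ in every configuration.
\begin{itemize}
\item If $p_\ell>p_j$, both entries equal $1$.
\item If $p_\ell<p_i$, both entries equal $0$.
\item If $p_i<p_\ell<p_j$, then $f_{i\ell}=1$ and $f_{j\ell}=0$.
\item If $p_\ell=p_i$, then $f_{j\ell}=0\le f_{i\ell}$.
\item If $p_\ell=p_j$, then $f_{i\ell}=1\ge f_{j\ell}$.
\end{itemize}
\end{itemize}
Hence $v_i-v_j\ge g$, and so $E(X_i(p_i)-X_j(p_j))\le\beta_i-\beta_j-2gk/(1+g)$. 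This gives the strict inequality provided $gk>0$. When $g=0$ the two sides coincide, so the strict claim needs $g>0$; I would flag this as an implicit assumption of the statement.

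Third, for the strictly ordered case $p_1<p_2<\dots$, there are no ties, so $\bm F$ is uniquely determined: $f_{i\ell}=[i<\ell]$. Then $v_i=g(n-i)$, so $v_i-v_{i+1}=g$ exactly, and substituting into the displayed difference formula gives the stated equality.

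The only step requiring care is the tie cases in the followership comparison, since there the entries of $\bm F$ may be fractional. The case check above shows that ties only ever help the inequality. The matrix inversion is routine.
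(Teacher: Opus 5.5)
Your proof is correct and follows the paper's own route: invert $\bm I-\bm G$ to see that the diagonal minus off-diagonal entries of $\bm M$ equal $1/(1+g)$, substitute the interior equality from Proposition \ref{prop:equilibrium} into the difference of expected outcomes, and split off the $(i,j)$ followership term, with your case check over the remaining players (ties included) supplying the sign argument the paper leaves implicit. Your remark that the strict inequality needs $g>0$ is also right, since for $g=0$ one gets $E(X_i(p_i)-X_j(p_j))=\beta_i-\beta_j$ exactly.
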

	

	Symmetry among players brings about coordination problems in the sense of equilibrium multiplicity. Moreover, the equilibrium set expands as complexity increases if coordination motives are sufficiently strong.
	
	\begin{corollary}\label{cor:multiplicity}
		Let $d_i=0$ for all $i\in I$ and players have the same unweighted centralities given by $\bm u\coloneqq (\bm I-\bm G)^{-1}\bm 1$. Then, $\bm p\in(0, \infty)^n$ is an equilibrium if and only if:
		\begin{align*}
			E( \bm X (\bm p) ) \in [(2\bm 1-\bm u) k, \bm u k].
		\end{align*}
		Moreover, $\bm uk$ is increasing $k$, and $(2- u_i) k$ is increasing in $k$ if and only if $u_i<2$.
	\end{corollary}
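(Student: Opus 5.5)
The plan is to specialize Proposition~\ref{prop:equilibrium} to $\bm d=\bm 0$ and interior profiles, and then to exploit the symmetry assumption. Write $\bm M=(\bm I-\bm G)^{-1}$. The assumption $\bm M\bm 1=\bm u$ with all entries of $\bm u$ equal, say $u_i=u$, is equivalent to all row sums of $\bm G$ being equal: $\sum_j g_{ij}=s$ for every $i$, with $u=1/(1-s)$.

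Since $\bm p\in(0,\infty)^n$ is interior, Proposition~\ref{prop:equilibrium} says that $\bm p$ is an equilibrium if and only if some $\bm p$-followership $\bm F$ satisfies $E(\bm X(\bm p))=k\bm M(\bm 1-2\bm v)$, where $\bm v=(\bm G\circ\bm F)\bm 1$. Multiplying by $\bm I-\bm G$, this becomes the row-by-row system
\begin{align*}
E_i-\sum_{j}g_{ij}E_j=k-2k\sum_j g_{ij}f_{ij}\quad\text{for all } i,
\end{align*}
with $E_i\coloneqq E(X_i(p_i))$. Because $\mu<0$, the ordering $p_i<p_j$ is the same as $E_i>E_j$, so $\bm F$ is pinned down on strictly ordered pairs and is free in $[0,1]$ (in both directions) on tied pairs.

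\textbf{Step 1 (equilibria are symmetric).} I will show that any interior equilibrium has all policies equal. Let $S$ be the set of players with the smallest policy, with common outcome $e^*$, and $T$ the set with the largest policy, with common outcome $e_*$. Suppose, for contradiction, that $e^*>e_*$.

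For $i\in S$ we have $\sum_j g_{ij}E_j\le s e^*$. The row equation then gives
\begin{align*}
(1-s)e^*\le k\Big(1-2\sum_{j\notin S}g_{ij}-2\sum_{j\in S}g_{ij}f_{ij}\Big).
\end{align*}
For $j\in T$ we have $\sum_\ell g_{j\ell}E_\ell\ge s e_*$, and $f_{j\ell}=0$ for every $\ell\notin T$. Hence
\begin{align*}
(1-s)e_*\ge k\Big(1-2\sum_{\ell\in T}g_{j\ell}f_{j\ell}\Big).
\end{align*}
The contradiction should come from comparing these two bounds. On the $S$ side, the followers carry the full exploration weight toward every player outside $S$. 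On the $T$ side, the leaders carry exploration weight only inside $T$. This is the same mechanism as in the two-player computation, where the putative ordering is inconsistent: $E_1=k/(1+g)<E_2=k(1+2g)/(1+g)$.

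The main obstacle is the tie-internal terms $\sum_{S}g_{ij}f_{ij}$ and $\sum_T g_{j\ell}f_{j\ell}$, which are unrestricted. My first attempt will be to sum the $S$-inequalities weighted by the rows of $\bm M$ restricted to $S$, and similarly for $T$, so that the tie-internal contributions are bounded by the internal row mass. This should leave a strict gap of at least $2k\sum_{j\notin S}g_{ij}>0$, which is positive whenever $S$ is not the whole player set and $\bm G$ connects $S$ to the rest. A disconnected component can be handled componentwise.

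\textbf{Step 2 (the box).} With all policies equal, $\bm E=e\bm 1$ and every pair is tied, so each $f_{ij}$ ranges freely over $[0,1]$. Row $i$ then reads $e(1-s)=k(1-2v_i)$. Since $v_i$ sweeps $[0,s]$, each row is solvable exactly when
\begin{align*}
e\in\big[uk(1-2s),\,uk\big]=\big[(2-u)k,\,uk\big],
\end{align*}
using $1-2s=2/u-1$. This proves the characterization, with the box understood along the diagonal that Step 1 forces. For the ``if'' direction one takes $f_{ij}$ constant in $j$ and solves $1-2\sum_j g_{ij}f_{ij}=e/(uk)$.

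\textbf{Step 3 (comparative statics).} The final claims are immediate. The map $k\mapsto uk$ is increasing because $u>0$, and $(2-u)k$ is increasing in $k$ exactly when $u<2$.
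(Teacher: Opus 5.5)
Your Step~1 is false in the generality of the statement, and so is your reading of the result as ``equilibria $=$ diagonal of the box.'' Equal row sums do not force interior equilibria to be symmetric, even when $\bm G$ is connected. Take $n=4$, with $g_{12}=g_{21}=g_{34}=g_{43}=\tfrac12$ and $g_{ij}=\tfrac1{10}$ for every pair with one player in $\{1,2\}$ and the other in $\{3,4\}$. Then $s=\tfrac7{10}$, $u=\tfrac{10}3$, and $\bm I-\bm G$ is positive definite. Let $E_1=E_2=\tfrac{10}7k$ and $E_3=E_4=\tfrac47k$, which is interior when $X_0>\tfrac{10}7k$, so that $p_1=p_2<p_3=p_4$. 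Take the $\bm p$-followership with the forced cross entries, together with $f_{12}=f_{21}=0$ and $f_{34}=f_{43}=1$. Then $\bm v=(\tfrac15,\tfrac15,\tfrac12,\tfrac12)$, and a direct check gives $(\bm I-\bm G)E(\bm X(\bm p))=k(\tfrac35,\tfrac35,0,0)=k(\bm 1-2\bm v)$. By Proposition~\ref{prop:equilibrium}, this asymmetric profile is an equilibrium.

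Your weighted-summation idea cannot neutralize the following: on a tie, \emph{both} players may use the left derivative, here $f_{34}=f_{43}=1$. A tightly linked group of leaders thus acquires its own exploration motive. With within-pair weight $a$ and cross weights $c$, such equilibria exist whenever $a>2c$. Your argument does work for the uniform network of Corollary~\ref{cor:symmetriceq}. There, subtracting a $T$-row from an $S$-row gives $(1+g)(e^*-e_*)\le -2gk$, which is a contradiction.

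The missing idea is that the ``only if'' direction needs no symmetry at all. For an interior equilibrium, $E(\bm X(\bm p))=k\bm M(\bm 1-2\bm v)$ with $\bm 0\le\bm v\le s\bm 1$. Moreover, $\bm M=\sum_{t\ge 0}\bm G^t$ is entrywise nonnegative, because $\bm G\ge 0$ has spectral radius $s<1$; this is also what makes the box nonempty. Hence $k(1-2s)\bm u\le E(\bm X(\bm p))\le k\bm u$, which is exactly $(2\bm 1-\bm u)k\le E(\bm X(\bm p))\le \bm uk$. The example above lies strictly inside this box. Your Step~2 then shows that every diagonal point of the box is an equilibrium, so the two corners are the least and greatest equilibria. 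This is the structure of the paper's proof, which bounds all equilibria between the extreme equilibria guaranteed by Lemma~\ref{lem:ID} and locates those at the corners.

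The statement should therefore be read as containment in the box plus attainment along the diagonal. Off-diagonal points of the box are not equilibria in general; this already fails for $n=2$, by your own computation. At the same time, as the example shows, off-diagonal equilibria can exist. Step~3 is fine.
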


	The game admits a potential function \citep{monderer_potential_1996}. Specifically, consider the function $V\colon P^n\to \mathbb R$,  given by
	\begin{align*}
		V(\bm p) = E\Big (\sum _{i\in I} d_iX(p_i) - \frac{1}{2}(X(p_i))^2 +\sum_{j\in I}\frac{1}{2}g_{ij}X(p_{i})X(p_j)\Big ),
 	\end{align*}
	and say that a $\bm p$ followership is \emph{skew-complementary} if $f_{ij}+f_{ji}	=1$ for all $(i, j)\in I^2$. The following result establishes existence and uniqueness of the potential maximizer.\footnote{Uniqueness of the potential-maximizer equilibrium obtains jointly with the multiplicity of equilibria because the potential is not smooth. Specific nondifferentiable potentials are studied as counterexamples to the results for smooth potentials \citep{radner_team_1962, neyman_correlated_1997}. These articles show strictly concave nondifferentiable potentials that are consistent with multiple equilibria. Instead, in games with strictly concave and smooth potentials, the solution to a first-order condition is the unique equilibrium of the game.}

		\begin{proposition}\label{prop:potentialmax}
		If $\bm G$ is symmetric, then $V$ is a potential for $\Gamma$, and there exists a unique potential maximizer. Moreover, the policy profile $\bm p$ maximizes  the potential if and only if there exists a skew-complementary $\bm p$-followership $\bm F$ such that:
		\begin{align*}
			E(\bm X(\bm p))\le \bm M(\bm d +k(\bm I-2\bm G \circ \bm F)\bm 1),
		\end{align*}
		and, for all $i\in I$, $p_i>p_0$ only if $E( X_i( p_i))=\sum_{j\in I}m_{ij}(d_j +k-2k\sum_{\ell\in I} g_{j\ell }f_{j\ell})$.
		\end{proposition}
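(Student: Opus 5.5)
The plan has three stages. First show that $V$ is an exact potential. Then show that $V$ is strictly concave in an appropriate sense, which gives a unique maximizer. Finally, characterize the maximizer through first-order (superdifferential) conditions.

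\emph{Potential property.} Expand $V$ using $E(X(p)) = X(p_0)+\mu(p-p_0)$, $\var(X(p))=\sigma^2(p-p_0)$, and $\cov(X(p_i),X(p_j))=\sigma^2(\min\{p_i,p_j\}-p_0)$. Symmetry of $\bm G$ means each unordered pair $\{i,j\}$ contributes $g_{ij}\,E(X(p_i)X(p_j))$ to $V$. Expand $U_i(\bm p) = -E\bigl((d_i+\sum_j g_{ij}X(p_j)-X(p_i))^2\bigr)$ and keep only the terms that depend on $p_i$. These are
\begin{align*}
-E(X(p_i)^2)+2d_iE(X(p_i))+2\sum_j g_{ij}E(X(p_i)X(p_j)).
\end{align*}
This is exactly $2$ times the $p_i$-dependent part of $V$, again using symmetry of $\bm G$. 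Hence $U_i(\bm p)-2V(\bm p)$ does not depend on $p_i$, so $V$ (equivalently $2V$) is a potential in the sense of \citet{monderer_potential_1996}.

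\emph{Existence and uniqueness of the maximizer.} Write $y_i = p_i - p_0 \ge 0$. Then $V$ equals a quadratic in the means plus linear variance terms plus $\sum_{i<j} g_{ij}\sigma^2\min\{y_i,y_j\}$. The mean part is $-\frac12 E(\bm X)^\top(\bm I-\bm G)E(\bm X)+\bm d^\top E(\bm X)$, with $E(\bm X)$ affine in $\bm y$ and slope $\mu\ne 0$. The min terms are concave because $g_{ij}\ge 0$, and the variance terms are linear. Strict concavity of $V$ therefore requires $\bm I-\bm G$ to be positive definite. Invertibility alone, which is all the model assumes, does not give this, so this is where the argument is fragile. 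I would either read positive definiteness into the standing assumption or show that the standard bound on coordination motives (for instance a spectral radius of $\bm G$ below $1$) delivers it. Once $V$ is strictly concave on the convex set $P^n$, I would show it is coercive: the quadratic term dominates, so $V\to-\infty$ as $\|\bm y\|\to\infty$. A unique maximizer then exists.

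\emph{Characterization.} Since $V$ is concave, $\bm p$ maximizes $V$ over $P^n$ if and only if $0$ lies in the superdifferential of $V$ plus the normal cone of $P^n$ at $\bm p$. The only source of nonsmoothness is the terms $\min\{y_i,y_j\}$. The superdifferential of such a term is the set of vectors $\sigma^2(f_{ij}e_i+f_{ji}e_j)$ with $f_{ij},f_{ji}\ge 0$ and $f_{ij}+f_{ji}=1$, subject to $f_{ij}=1$ whenever $y_i<y_j$. These are precisely skew-complementary $\bm p$-followerships. By the sum rule for superdifferentials of concave functions, the superdifferential of $V$ is the set of gradients of the smooth part plus $\sigma^2\sum_j g_{ij}f_{ij}$ in coordinate $i$, as $\bm F$ ranges over skew-complementary $\bm p$-followerships. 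The optimality condition is then, componentwise,
\begin{align*}
\mu\bigl(d_i - E(X(p_i)) + \textstyle\sum_j g_{ij}E(X(p_j))\bigr) - \tfrac{\sigma^2}{2} + \sigma^2\textstyle\sum_j g_{ij}f_{ij} \le 0,
\end{align*}
with equality if $p_i>p_0$. Divide by $|\mu|$, recall $\mu<0$ and $k=\sigma^2/2|\mu|$, and rearrange. This gives $(\bm I-\bm G)E(\bm X(\bm p)) \le \bm d + k(\bm 1-2(\bm G\circ\bm F)\bm 1)$ componentwise, with equality in coordinates where $p_i>p_0$.

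The last step, which I expect to be the main obstacle, is converting this to the stated form with $\bm M=(\bm I-\bm G)^{-1}$. Equality on interior coordinates together with inequality elsewhere must yield the displayed statements about $\bm M(\cdots)$. I would follow whatever argument was used to prove Proposition \ref{prop:equilibrium}, since the conditions there are identical except that $\bm F$ there need not be skew-complementary. Presumably that argument relies on $\bm M\ge 0$ entrywise, so that multiplying the inequality by $\bm M$ preserves it. I would need to check that $\bm M\ge 0$ holds under the paper's assumptions, or else that the equilibrium proof handles boundary coordinates in a way that does not require it.
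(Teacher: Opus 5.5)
\textbf{Where you match the paper.} Up to the componentwise system $(\bm I-\bm G)E(\bm X(\bm p))\le \bm b$, with $b_i=d_i+k-2k\sum_{\ell}g_{i\ell}f_{i\ell}$ and equality where $p_i>p_0$, your argument is essentially the paper's. The paper gets the potential from von Neumann--Morgenstern equivalence with $\bm d'\bm x-\frac12\bm x'(\bm I-\bm G)\bm x$ rather than by direct expansion. It gets existence by truncating to $[p_0,p]^n$ rather than by coercivity. It then computes the same superdifferential, indexed by skew-complementary followerships. Your worry about definiteness is well founded: the paper simply asserts that $\bm I-\bm G$ is positive definite. For symmetric nonnegative $\bm G$ this is equivalent to $\rho(\bm G)<1$, which also gives $\bm M=\sum_t\bm G^t\ge 0$. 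It is genuinely needed: with $n=2$ and $g_{12}=g_{21}=2$, $\bm I-\bm G$ is invertible but $V\to+\infty$ along $p_1=p_2\to\infty$.

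\textbf{The gap: converting to the $\bm M$-form.} You left this step open, and it cannot be closed the way you plan. The $\bm M$-form is not equivalent to your componentwise system when some but not all players sit at $p_0$. Write the system as $(\bm I-\bm G)E(\bm X(\bm p))=\bm b-\bm s$, with $\bm s\ge\bm 0$ and $s_i=0$ whenever $p_i>p_0$. Then $E(\bm X(\bm p))=\bm M\bm b-\bm M\bm s$. Nonnegativity of $\bm M$ does give $E(\bm X(\bm p))\le\bm M\bm b$. But for an interior player it gives $E(X_i(p_i))=(\bm M\bm b)_i-\sum_j m_{ij}s_j$, which differs from $(\bm M\bm b)_i$ as soon as $i$ is linked through $\bm M$ to a status-quo player with slack. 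The converse implication fails as well.

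\textbf{Counterexample.} Take $n=2$, $g_{12}=g_{21}=\frac12$, $k=1$, $X(p_0)=0$, $d_1=2$, $d_2=-3$. Consider the profile with $E(X(p_1))=0$ (so $p_1=p_0$) and $E(X(p_2))=-2$. It satisfies your system with $f_{12}=1$, $f_{21}=0$, $\bm b=(2,-2)$: row 1 has slack $1$ and row 2 is tight. Since $V$ is strictly concave, it is the unique maximizer of $V$. Yet $\bm M\bm b=(\frac43,-\frac43)$, so the stated condition would demand $E(X(p_2))=-\frac43$. Conversely, the profile with expected outcomes $(0,-\frac43)$ satisfies the stated condition without being a maximizer.

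\textbf{What you can prove.} There is no argument to borrow from Proposition~\ref{prop:equilibrium}. Its proof, like the paper's proof of this proposition, passes from the first-order conditions to the $\bm M$-form by assertion. What you can prove is your componentwise characterization. It reduces to $E(\bm X(\bm p))=\bm M\bm b$ when every $p_i>p_0$, which is the only case used in Section~\ref{sec:organization}. State the result in that form (or restrict the $\bm M$-form to interior profiles) rather than try to establish the $\bm M$-form in general.
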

	
		The characterization of the potential maximizer is important because the utilitarian-maximal profile maximizes the potential of a modified game with twice the degree of coordination motives.\footnote{This observation is known without complexity \citep{jackson_games_2015}.}

	\section{Centralization in organizations}\label{sec:organization}
	This section considers a stylized model of an organization with multiple divisions. The model and notation follow closely \citet{alonso2015}.
	
	The firm consists of two divisions, each producing a different good.  The inverse demand for product $i$ is $q_i\mapsto a_i-\frac{1}{b}q_i$, in which $b>0$ measures the price elasticity of demand. The cost of division $i$ is $c_iq_i-gq_1q_2$, in which: $q_i$ is the quantity produced by division $i$, $g> 0$ measures the degree of cost externalities, and $c_i>0$.  An increase in the quantity produced by one division reduces the marginal cost of the other division. This cost externality, together with the complexity introduced in the following paragraph, generates strategic complementarity as per Lemma \ref{lem:ID}.  The profit of division $i$ given the quantity profile $\bm q$ is
	\begin{align*}
		\pi_i(\bm q) \coloneqq \Big   (a_i-\frac{1}{b}q_i - c_i +gq_{j}\Big )q_i.
	\end{align*}

	Each division chooses a policy $p_i\in P$. The function $X$ maps production policies into quantities. The profits of division $i$ given the pair of policies $\bm p$ are $\Pi_i(\bm p)\coloneqq \pi_i (\bm X (\bm p))$. Divisions set production policies simultaneously and independently in the \emph{production game} $\langle  \{1, 2\}, (P, E( \pi _i (\bm X (\cdot ))))_{i\in \{1, 2\}} \rangle $. The maximization of total profit, $W\coloneqq \Pi_1+\Pi_2$, is an important benchmark in organizational economics  \citep{gibbons_decisions_2013}.  I assume that  $g<1/b$, so that the Hessian of $W$ is negative definite.\footnote{Under this condition, the equilibrium set does not expand as $\sigma ^2$ increases, so Proposition \ref{prop:org} is not due to Corollary \ref{cor:multiplicity}.}

	The following result investigates the compatibility of managerial incentives with total-profit maximization, under the hyopothesis that the status quo is sufficiently bad for the twe division managers.
	\begin{proposition}\label{prop:org}
		There exists a unique policy profile $\bm p^O$ that maximizes $W$.  Moreover, if $X_0\ge  \frac{b a_1 - b c_1 + b g (b a_2 - b c_2) + 2 k}{2 (1 - b g)}$, then there exists $C>0$ such that: $\bm p^O$ is an equilibrium of the production game if and only if $k\ge C$.
	\end{proposition}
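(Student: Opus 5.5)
The plan is to rewrite the production game as an instance of the coordination game $\Gamma$, so that Proposition \ref{prop:equilibrium} describes its equilibria. Separately, I will compute the maximizer of $W$ through one-sided first-order conditions in the policies. The two sets of conditions differ only in how much weight they put on the cross term $X_1X_2$, and the comparison should come down to linear inequalities in $k$.

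\emph{Step 1 (reduction and uniqueness of $\bm p^O$).} Write $m_i\coloneqq E(X(p_i))=X_0+\mu(p_i-p_0)$. Then
\[
E(\pi_i(\bm X(\bm p)))=(a_i-c_i)m_i-\tfrac1b\big(m_i^2+\sigma^2(p_i-p_0)\big)+g\big(m_im_j+\sigma^2(\min\{p_1,p_2\}-p_0)\big).
\]
Completing the square in $X_i$ gives $\pi_i=-\frac1b\big(\tfrac b2(a_i-c_i)+\tfrac{bg}{2}X_j-X_i\big)^2+(\text{terms in }X_j\text{ only})$. So the production game is $\Gamma$ with $d_i=\tfrac b2(a_i-c_i)$ and $g_{12}=g_{21}=\tfrac{bg}{2}$, up to a positive rescaling that does not change best responses. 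Now $E(W)$ is the sum of three pieces. The mean part is quadratic in $(p_1,p_2)$, and its Hessian is negative definite because $g<1/b$. The variance part is linear in $\bm p$. The covariance part $2g\sigma^2(\min\{p_1,p_2\}-p_0)$ is concave. Hence $W$ is strictly concave and coercive on $P^2$, so $\bm p^O$ exists and is unique.

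\emph{Step 2 (characterizing $\bm p^O$).} The first-order conditions of $W$ are superdifferential conditions. Off the diagonal, only the lower policy moves the covariance, which yields a linear system. On the diagonal $p_1=p_2$, there is a multiplier $\lambda\in[0,1]$ that splits $2g\sigma^2$ between the two divisions. I will use the hypothesis on $X_0$ to rule out corners, that is, to show that $p^O_i>p_0$ for both $i$. With $p_i>p_0$, each condition reads, after dividing by $|\mu|$,
\[
m_i=\tfrac b2(a_i-c_i)+bg\,m_j+k\big(1-2bg\,\phi_i\big),
\]
where $\phi_i$ is $i$'s share of the covariance subgradient (a $W$-followership entry). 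The lower bound on $X_0$ is exactly of the form of these equations evaluated at the status quo, so I expect interiority to follow by plugging $m_i=X_0$ into them.

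\emph{Step 3 (comparison with the equilibrium conditions).} By Proposition \ref{prop:equilibrium}, each division's equilibrium condition has the same form, but with $bg$ replaced by $\tfrac{bg}{2}$ in both the mean-coupling and followership terms. The $\tfrac{bg}{2}$ in the mean term makes this an exact mismatch, not a relabeling. The plan then splits on where $\bm p^O$ lies.

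\emph{Off the diagonal} ($p^O_1\neq p^O_2$), the followership is pinned down. The two linear systems then cannot both hold, because the mean couplings $bg$ and $bg/2$ differ and interiority excludes corners. So $\bm p^O$ is not an equilibrium.

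\emph{On the diagonal} ($p^O_1=p^O_2$), the common value $m^O$ is determined by the $W$-conditions. Being an equilibrium then requires the existence of $f_{12},f_{21}\in[0,1]$ making each division's first-order condition hold at $m^O$. Each division's condition is a one-sided inequality pair, and these become linear inequalities in $k$.

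Two facts remain to be shown:
\begin{enumerate}
\item The optimum $\bm p^O$ lands on the diagonal exactly when $k$ is large. The reason is that the diagonal kink's width in the $W$-conditions is $2bgk$, which grows with $k$ while the gap $|m_1-m_2|$ driven by $d_1-d_2$ is fixed.
\item The division-level inequalities at $\bm p^O$ then hold for $k$ beyond a threshold.
\end{enumerate}
I expect the main obstacle to be the second fact: proving that the set of $k$ for which both divisions' inequalities hold is an up-ray $[C,\infty)$. This needs care because $m^O$ itself moves with $k$ (it is affine in $k$). I would first compute $m^O$ explicitly and substitute it, to check that the coefficient on $k$ in each inequality has the right sign. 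The bound on $X_0$ keeps interiority valid uniformly in $k$. Then $C$ is the maximum of the resulting thresholds and the diagonal-entry threshold, and it is positive because at $k=0$ the off-diagonal mismatch argument applies.
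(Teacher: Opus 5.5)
Your route is the paper's. You use the best-response equivalence with $d_i=\tfrac b2(a_i-c_i)$ and $g_{12}=g_{21}=\hat g\coloneqq\tfrac{bg}{2}$, and the fact that expected total profit is the potential of the game with doubled weight $2\hat g$; your superdifferential conditions for $W$ are exactly Proposition~\ref{prop:potentialmax} for that game. You then compare with Proposition~\ref{prop:equilibrium}, split on and off the diagonal.

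The step you flag as the main obstacle works out. For $d_1\ge d_2$, the diagonal maximizer has common expected outcome $m^O=\frac{d_1+d_2}{2(1-2\hat g)}+k$, and it exists iff $d_1-d_2\le 4\hat g k$. The diagonal equilibria are the $m$ with $(1-\hat g)m\in[d_1+k(1-2\hat g),\,d_2+k]$. After substituting $m^O$, both inequalities read $\hat g k\ge\text{const}$, so you get an up-ray with $C=\frac{d_1-d_2}{2\hat g}+\frac{|d_1+d_2|}{2(1-2\hat g)}$. When $d_1+d_2\ge0$ this agrees with the paper's $\hat C$, which is expressed in units of $\sigma^2=2|\mu|k$.

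Three points do not go through as written.

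\emph{Off the diagonal.} Different coefficients on $m_j$ do not by themselves rule out a common solution of the welfare and equilibrium conditions; you have to compute. Let $i$ be the follower ($p_i<p_j$, so $m_i>m_j$). Subtracting the leader's two conditions gives $\hat g m_i=0$. Subtracting the follower's two conditions gives $\hat g(m_j-2k)=0$. Hence $m_i=0<2k=m_j$, which is the contradiction you need.

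\emph{Positivity of $C$.} Your small-$k$ argument needs $d_1\ne d_2$. If $d_1=d_2$, then $\bm p^O$ is on the diagonal for every $k$, and positivity must come from the $|d_1+d_2|$ term instead. If $d_1=d_2=0$, then $C=0$ and $\bm p^O$ is an equilibrium for all $k>0$. The statement itself fails in that degenerate case.

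\emph{Interiority.} The bound on $X_0$ contains $k$, so for fixed $X_0$ it cannot give interiority ``uniformly in $k$.'' Read the result as: $C$ depends only on $(d_1,d_2,\hat g)$, and the equivalence holds at each $k$ where the bound holds. Even so, interiority of $\bm p^O$ is not just ``the equations at the status quo.'' It requires checking each corner configuration against the bound. That check uses sign restrictions on the $d_i$ that the statement leaves implicit; $d_1\ge d_2\ge0$ suffices. Without them the bound can hold while $p^O_1=p_0$. For example, take $bg=\tfrac12$, $k=1$, $d_1=-5$, $d_2=-10$, $X_0=-15$.
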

	
 	First, observe that total profit corresponds to the utilitarian welfare in the production game, and the production game is best-response equivalent to $\Gamma$ if $2d_i=ba_i-bc_i$ and $bg=2g_{ij}$. Hence, I identify the profit-maximal policy profile by applying Proposition \ref{prop:potentialmax} to the modified production game in which cost externalities are doubled.

 	Three observations follow from the two-player analysis in Section \ref{sec:twoplayers} and key for the result. First, there are multiple equilibria only if players choose the same policy (Figure \ref{fig:twoplayer}.) Moreover, because the profit maximizer is obtained by doubling the coordination motives, profit is not maximized by an equilibrium if there exists a unique equilibrium. Therefore, a necessary condition for profit to be maximized by an equilibrium is equilibrium multiplicity.

 	The second observation is illustrated in Figure \ref{fig:org_policies}. Sufficiently high coordination motives induce equilibrium multiplicity, due to the leader-follower asymmetry (Section \ref{sec:twoplayers}.) Moreover, the cutoff value of $k$ above which multiple equilibria arise is decreasing in $g$: coordination motives magnify the leader-followership asymmetry. So, if there are multiple equilibria, then automatically players choose the same policy under profit maximization.

 	Finally, consider an environment with multiple equilibria, and refer to Figure \ref{fig:org_policies}. There exists a largest and smallest equilibrium by  Lemma \ref{lem:ID} \citep{milgrom_rationalizability_1990}. Intuitively, ``in'' the largest equilibrium, players are exploring more and so are more sensitive to complexity than in other equilibria. Hence, the largest equilibrium, as a single policy, decreases in $k$ faster than the potential-maximizer equilibrium, which in turn decreases faster than the smallest equilibrium. Crucially, the derivative of the potential maximizer with respect to $\sigma ^2$  does not depend on  $g$. In fact, the potential maximizer in a region of multiple equilibria is found by treating the two players as a single agent who solves a mean-variance tradeoff. We conclude that the profit maximizer, as a single policy, is decreasing in $\sigma ^2$ faster than the smallest equilibrium and slower than the largest equilibrium, so the profit maximizer is in the equilibrium set if $\sigma ^2$ is sufficiently high.

	\begin{figure}[tp]
		\centering
		\begin{tikzpicture}
			\begin{axis}[
				declare function={
					xnaught=15;
					m=-0.7;
					g=0.3333 ;
					done=2;
					dtwo=1;
					potsymmcutoff=-m*(done-dtwo)/(2*g);
					eqsymmcutoff=-m*(done-dtwo)/g;
					poteqcutoff = m*((done - 3*dtwo)*g - (done - dtwo)) / (g*(1 - 2*g));
				},
				ytick = {\empty},
				yticklabels = {\empty},
				xlabel={$\sigma^2$},
				ylabel={Policy $p_i$},
				xtick={eqsymmcutoff, poteqcutoff, potsymmcutoff},
				xticklabels={$-\mu \frac{d_1-d_2}{g}$, $C$, \empty},
				legend pos= outer north east,
				domain=0:25,
				xmin=0,
				xmax=poteqcutoff+1.2,
				extra x ticks={potsymmcutoff},
				extra x tick labels={$-\mu \frac{d_1-d_2}{2g}$},
				extra x tick style={
					major tick length=1.15\baselineskip,
					tick align=outside,
				}
				]
				\pgfmathsetmacro{\g}{g}
				\pgfmathsetmacro{\dOne}{done}
				\pgfmathsetmacro{\dTwo}{dtwo}
				\pgfmathsetmacro{\m}{m}
				\pgfmathsetmacro{\Xzero}{xnaught}		
				
				\addplot[ultra thick, blue, domain = 0:eqsymmcutoff]
				{ (1/(1-\g*\g) * (\dOne/\m + \g*\dTwo/\m) - \Xzero/\m)
					- (1/(1+\g)) * x/(2*\m*\m) };
				\addlegendentry{$p^*_1$}
				
				\addplot[ultra thick, red, domain = 0:eqsymmcutoff]
				{ (1/(1-\g*\g) * (\dTwo/\m + \g*\dOne/\m) - \Xzero/\m)
					- ((1+2*\g)/(1+\g)) * x/(2*\m*\m) };
				\addlegendentry{$p^*_2$}
				
				\addplot[black, thick, dashed, name path = lower , domain = eqsymmcutoff:25]
				{ (\dTwo/(1-\g) - \Xzero)/\m
					+ x/((1-\g)*(-2*\m*\m)) };
				\addlegendentry{$L$}
				
				\addplot[black, thick, dashdotted, name path = upper, domain = eqsymmcutoff:25]
				{ (\dOne/(1-\g) - \Xzero)/\m
					+ ((1-2*\g)/(1-\g))*x/(-2*\m*\m) };
				\addlegendentry{$U$}
				
				\addplot[blue!70!green, ultra thick, densely dotted, domain = 0:potsymmcutoff]
				{ (1/(1-(2*\g)*(2*\g)) * (\dOne/\m + (2*\g)*\dTwo/\m) - \Xzero/\m)
					- (1/(1+(2*\g))) * x/(2*\m*\m) };
				\addlegendentry{$r^*_1$}
				
				\addplot[red!70!green, ultra thick, densely dotted, domain = 0:potsymmcutoff]
				{ (1/(1-(2*\g)*(2*\g)) * (\dTwo/\m + (2*\g)*\dOne/\m) - \Xzero/\m)
					-((1+4*\g)/(1+(2*\g))) * x/(2*\m*\m) };
				\addlegendentry{$r^*_2$}
				
				\addplot[green, ultra thick, densely dotted, name path = potential, domain = potsymmcutoff:25]
				{ ((\dOne+\dTwo)/(2*(1-2*\g)) - \Xzero)/\m
					+ x/(-2*\m*\m) };
				\addlegendentry{$r^*$}
				
				\addplot[blue!30, opacity=0.7] fill between[of=lower and upper, soft clip={domain=eqsymmcutoff:poteqcutoff}];
				\addplot[green!30, opacity=0.7] fill between[of=lower and upper, soft clip={domain=poteqcutoff:25}];
				
			\end{axis}
		\end{tikzpicture}
		\caption{There are multiple equilibria in $\Gamma $ if $\sigma ^2>-\mu\frac{d_1-d_2}{g}$. The functions $L$ and $U$ are the smallest and largest equilibrium policy, respectively, defined if there are multiple equilibria; the functions $p_2^*$ and $p^*_1$ are the equilibrium policies of the leader and the follower, respectively, defined if there exists a unique equilibrium. The functions $r_2^*$ and $r^*_1$ are the equilibrium policies, with double coordination motives, of the leader and the follower, respectively, defined if there exists a unique equilibrium in the modified game. The function $r$ is the policy in the potential maximizer of the modified game, defined if there are multiple equilibria in the modified game.}\protect\label{fig:org_policies}
\end{figure}
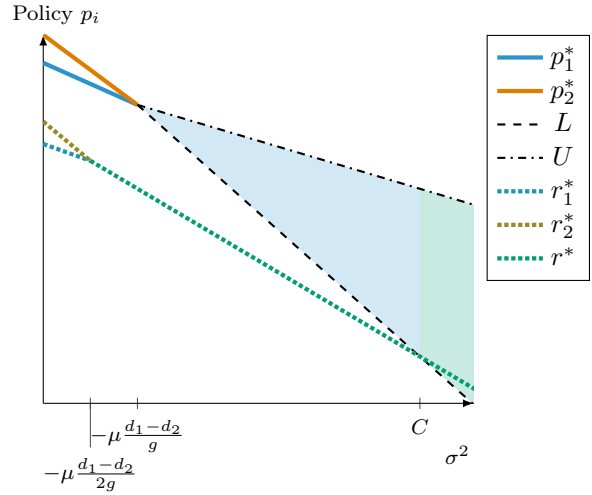

		A key reason for the presence of noise in the mapping from production processes to quantities is frictions along the chain of command. Suppose that each division manager communicates production instructions to lower‑level managers, who in turn interact with store managers, and so on. The division manager is uncertain about how her instructions are transmitted along this chain and how they are ultimately implemented. Complexity captures the noise perceived by the division manager: for example, a longer chain of communication makes the outcome of the original instructions less predictable.

	A necessary and sufficient condition to for maximization of total profits to be implemented in equilibrium is that complexity exceeds the threshold. 	Hence, centralized decision making is less desirable in the presence of coordination problems induced by complexity. Instead, the management of a holding company can leverage the coordination problems to make the maximization of total profits a focal point for subsidiary managers. This result complements the literature that studies informational asymmetries within
	organizations \citep{dessein_adaptive_2006, alonso2008, rantakari_governing_2008}. This model favors centralization because I abstracts from division managers’ private information.

	\newpage 
	\appendix
	\section{Proofs}
	The proofs of Lemma \ref{lem:ID}, Proposition \ref{prop:existence}, Proposition \ref{prop:equilibrium}, Corollary \ref{cor:multiplicity}, and Corollary \ref{cor:symmetriceq} are in the Supplemental appendix.

	\begin{proof}[Proof of Proposition \ref{prop:potentialmax}]
				Consider the function $v\colon\bm x\mapsto  \bm d ' \bm x -\frac{1}{2} \bm x'(\bm I- \bm G)\bm x$, and let $\bm G$ be symmetric. The games $\langle I, ((-\infty, X_0], u_i)_{i\in I}\rangle$ and $\langle I, ((-\infty, X_0], v)_{i\in I}\rangle$ are von-Neumann--Morgenstern equivalent \citep{morris_best_2004}. Thus, $\langle I, (P, U_i)_{i\in  I}\rangle$ and $\langle I, (P, V)_{i\in I}\rangle$ are von-Neumann--Morgenstern equivalent. Specifically, $2V$ is an exact potential \citep{monderer_potential_1996} for $\langle  I, (P, U_i)_{i\in I}\rangle$.
				
			Note that
		\begin{align*}
			V(\bm p) = \bm d ' E(\bm X(\bm p))- \frac{1}{2}  E(\bm X(\bm p)) '(\bm I- \bm G)E(\bm X(\bm p)) -\\ - \frac{1}{2}  \sum _{i\in I}\var (X_i(p_i))+\frac{1}{2}\sum_{(i, j)\in I^2} g_{ij}\cov (X_i(p_i), X_j(p_j)),
		\end{align*}
		and $V$ is strictly concave because $\bm I-\bm G$ is positive definite and $(p_i, p_j)\mapsto \cov (X_i(p_i), X_j(p_j))$ is concave. Letting $\partial V$ denote the superdifferential of $V$, by the rules of subdifferential calculus \citep{hiriart-urruty_fundamentals_2004}: $\bm y\in\partial V(\bm p)$ if and only if there exists a skew-complementary $\bm p$-consistent followership $\bm F$ such that
		\begin{align*}
		\text{for all} \ i\in I, \ 	y_i =  \mu d_i  -\mu E(X( p_i))+ \mu \sum_{j\in I}g_{ij}E(X(p_j))-\frac{1}{2}\sigma ^2 + \sum_{j\in I}g_{ij}\sigma ^2f_{ij}.
		\end{align*}
	For  all $p\ge p_0$, $\arg \max _{\bm p\in [p_0, p]^n}V(\bm p)\ne \emptyset $; moreover, $\bm p$ maximizes $V$ on $[p_0, p]^n$ if and only if there exists a $\bm p$-consistent skew-complementary followership $\bm F$ such that
		\begin{align*}
			\text{for all} \ i\in I, \ 	p_i&=p_0 \ \text{implies}	\  X_0\le \sum_{j\in I}m_{ij}(d_j +k-2\sum_{\ell\in I} g_{j\ell }kf_{j\ell}),\\
			p_i&\in (p_0, p)\ \text{implies}	\  E( X_i( p_i))=\sum_{j\in I}m_{ij}(d_j +k-2\sum_{\ell\in I} g_{j\ell }kf_{j\ell}), \\ 
			p_i&=p \ \text{implies}	\ E( X_i( p))\ge \sum_{j\in I}m_{ij}(d_j +k-2\sum_{\ell\in I} g_{j\ell }kf_{j\ell}).
		\end{align*}

		Finally, define $\overline x=\min_{i\in  I}\min_{\bm F\in[0, 1]^{n\times n}} \sum_{j\in I}m_{ij}(d_j +k-2\sum_{\ell\in I} g_{j\ell }kf_{j\ell})$ and $\overline p = p_0+\frac{1}{-\mu} (X_0-\overline x)_+$.  A potential maximizer exists in $P^n$ because $ \argmax_{\bm p\in [p_0, p]^n}V(\bm p)=\argmax_{\bm p\in P^n}V(\bm p)$ for all $p\ge\overline p$.
	\end{proof}

	\begin{proof}[Proof of Proposition \ref{prop:org}]
		Suppose $n=2$, $\bm G$ is symmetric, $\hat g\coloneqq g_{12}$, $\bm I-\bm G$ be invertible, and define 	\begin{align*}
			r^*\coloneqq	& 	\frac{d_1+d_2}{2(1-2\hat g)}\frac{1}{\mu} -\frac{X_0}{\mu}-\frac{\sigma ^2}{2\mu^2} \\
			L\coloneqq	&  \frac{1}{1-\hat g} \frac{d_2}{\mu} - \frac{X_0}{\mu} -\frac{1}{1-\hat g}\frac{\sigma ^2}{2\mu^2}  \\
			U\coloneqq & \frac{1}{1-\hat g} \frac{d_1}{\mu} - \frac{X_0}{\mu} -\frac{1-2 \hat g}{1- \hat g}\frac{\sigma ^2}{2\mu^2} .
		\end{align*}
		
	By Proposition \ref{prop:equilibrium}, the set of equilibria $\mathcal E$ is
	\begin{align*}
		\mathcal E \coloneqq \begin{cases}
			\left \{\begin{pmatrix}
				\frac{1}{1-{\hat g}^2}\Big ( \frac{d_1}{\mu } +{\hat g}\frac{d_2}{\mu }     \Big ) -\frac{X_0}{\mu } +\frac{1}{1+{\hat g}}\frac{k}{\mu}\\  \frac{1}{1-{\hat g}^2}\Big ( \frac{d_2}{\mu } +{\hat g}\frac{d_1}{\mu }     \Big ) -\frac{X_0}{\mu } + \frac{1+2{\hat g} }{1+{\hat g} }\frac{k}{\mu}
			\end{pmatrix}\right \}, & \text{if} \ d_1\ge d_2 +  2{\hat g}k,\\
			\left \{\begin{pmatrix}
				s\\s
			\end{pmatrix} | s\in [L, U]\right \}, & \text{if} \  d_1< d_2 +  2{\hat g} k,\\
		\end{cases}
	\end{align*}
 	if all equilibria are interior.
	By Proposition \ref{prop:potentialmax}, the potential maximizer is
	\begin{align*}
		\bm t  \coloneqq \begin{cases}
			\begin{pmatrix}
				\frac{1}{1-(2{\hat g} )^2}\Big ( \frac{d_1}{\mu } +(2{\hat g} )\frac{d_2}{\mu }     \Big ) -\frac{X_0}{\mu } +\frac{1}{1+(2{\hat g})}\frac{k}{\mu}\\  \frac{1}{1-(2{\hat g})^2}\Big ( \frac{d_2}{\mu } +(2{\hat g})\frac{d_1}{\mu }     \Big ) -\frac{X_0}{\mu } + \frac{1+4{\hat g}}{1+(2{\hat g})}\frac{k}{\mu}
			\end{pmatrix}, & \text{if} \ d_1\ge d_2 +  4{\hat g}k,\\
			\bm r^*, & \text{if} \  d_1< d_2 +  4{\hat g}k,
		\end{cases}
	\end{align*}
	if interior. A sufficient condition for interiority of the potential maximizer and all equilibria is
	\begin{align*}
		X_0 > \frac{d_1+2{\hat g}d_2+k}{1-2{\hat g}}.
	\end{align*}
	
	Observe that the potential is the utilitarian welfare of the game obtained from $\Gamma$ only by replacing $g_{12}$ with $2g_{12}$ \citep{jackson_games_2015}. Without loss, consider the case $d_1\ge d_2$. The utilitarian-maximal profile belongs to the equilibrium set if and only if
	\begin{align*}
		k\ge \hat C \;=\; \frac{\mu\bigl(d_1 g - d_1 - 3 d_2 g + d_2\bigr)}{{\hat g}(1-2{\hat g})}. \ \text{and} \ d_1\le  d_2+2k{\hat g}.
	\end{align*}

	The result follows from three observations. First, the production game admits a potential with a maximizer found by Proposition \ref{prop:potentialmax}, 	if  $2d_i=ba_i-bc_i$ and $bg=2{\hat g}$. Second, $W$ is the potential of the production game in which cost externalities are doubled. Third, the equilibrium set of the production game is characterized using Proposition \ref{prop:equilibrium}, because the two games are best-response equivalent \citep{morris_best_2004}.
	\end{proof}
	 \newpage 
	 
	\section{Supplemental appendix}
	
	Proposition \ref{prop:existence} and \ref{prop:equilibrium} are implied by Proposition \ref{app:prop:equilibriumexistence}, and Lemma \ref{lem:ID} is implied by Proposition \ref{app:prop:ID}. I establish these results in a model that nests the model in the main text as a special case.

	\subsection{Model}
	
		We consider an $n$-dimensional Gaussian process on $P\coloneqq [p_0, \infty)$, whose $i$-th dimension is denoted by $X_i$, and with the following restrictions, for parameters $(\mu_1, \dots, \mu_n)\in (-\infty, 0) ^n$ and $(\sigma _{ij})_{(i, j)\in I^2 }\in\mathbb R^{2n}_{+}$. The mean of the $i$-th component is given by $E(X_i(\cdot))\colon p_i\mapsto  X_0+\mu_i (p_i-p_0)$, the variance of the $i$-th component is given by $\var (X_i(\cdot))\colon p_i\mapsto \sigma_{ii}(p_i-p_0)$, and the covariances are determined by $\cov (X_i(p_i), X_j(p_j)) = \sigma_{ij}\min\{p_i-p_0, p_j-p_0\}$, for all $(i, j)\in I\times  I\setminus \{i\}$.
	
	We use the following notation: $k_{ij}=-\sigma_{ij}/(2\mu_i)$. The vector-valued function $(p_1, \dots, p_n) \mapsto (X_1(p_1), \dots, X_n(p_n))$ on $P^n$ evaluated at $\bm p$ is denoted by $\bm X(\bm p)$. The profile with $n$ outputs $\bm y\in\mathbb R^n$ is also denoted by $(y_1, \dots, y_n)$ and $(y_i, \bm y_{-i})$, the associated column vector is also denoted by $\bm y$. The $n$-$n$ matrix with $g_{ij}$ as its $i$-$j$th entry is denoted by $\bm G$. We denote by $\bm M$ the matrix $(\bm I-\bm G)^{-1}$, if well defined, by $v_{ij}$ and $\bm v _i$ the $i$-$j$ entry and the column vector corresponding to the $i$th row of a matrix $\bm V$, and by $ v_i$ the $i$-th entry of the vector $\bm v$. 
	
	We study the normal-form game $\Gamma = \langle  I, (P, U_i)_{i\in I} \rangle $, in which the payoff of player $i$ is given by $U_i\colon (p_1, \dots, p_n) \mapsto   E (u_i(X_1(p_1), \dots, X_n(p_n)))$. We make reference to the game with bounded strategy spaces $\Gamma_p \coloneqq \langle  I, ([p_0, p], U_i)_{i\in I} \rangle $, for $p\in (p_0, \infty)$, and the game without uncertainty $\Gamma^0\coloneqq \langle  I, (P, u_i(E(\bm X(\cdot))))_{i\in I} \rangle$.

	Given a player $i$, an \emph{$i$-followership} is a vector $\bm f\in [0,1]^n$; given a policy profile $\bm p$ , an $i$-followership $\bm f$ is \emph{$\bm p$-consistent} if: $p_i<p_j$ implies $f_{j}=1$ and $p_i>p_j$ implies $f_{j}=0$. A \emph{followership} is a matrix $\bm F\in[0,1]^{n\times n}$ such that the $i$-th row of $\bm F$ is an $i$-followership. Given a policy profile $\bm p$, a followership $\bm F$ is \emph{$\bm p$-consistent} if: the $i$-th row of $\bm F$ is $\bm p$-consistent for every $i\in I$. A followership $\bm F$ is \emph{skew-complementary} if: $f_{ij}=1-f_{ji}$ for all $(i, j)\in I^2$.

	\subsection{Proof of Proposition \ref{prop:existence} and \ref{prop:equilibrium}}

	\begin{lemma}\label{app:lem:quasiconcavity}
		For all $i\in I$, the payoff $p_i\mapsto U_i((p_i, \bm p_{-i}))$ is strictly quasiconcave. Moreover, if $\bm G\circ  \bm \Sigma $ is nonnegative, then $p_i\mapsto U_i((p_i, \bm p_{-i}))$ is strictly concave.
	\end{lemma}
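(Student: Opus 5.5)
We write $U_i$ as a function of $p_i$ alone, holding $\bm p_{-i}$ fixed, and split it into an expected-outcome part and a variance--covariance part. Expanding $u_i$ and taking expectations gives
\begin{align*}
U_i(\bm p) = -\Big(d_i + \sum_{j} g_{ij} E(X_j(p_j)) - E(X_i(p_i))\Big)^2 - \var\Big(\sum_j g_{ij}X_j(p_j) - X_i(p_i)\Big).
\end{align*}
The first term is a strictly concave quadratic in $p_i$, because $E(X_i(p_i)) = X_0+\mu_i(p_i-p_0)$ is affine with slope $\mu_i\neq 0$. In the variance term, the part $\var(\sum_{j\ne i} g_{ij}X_j(p_j))$ does not depend on $p_i$. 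What remains is
\begin{align*}
-\sigma_{ii}(p_i-p_0) + 2\sum_{j\neq i} g_{ij}\sigma_{ij}\min\{p_i-p_0,\,p_j-p_0\}.
\end{align*}
Each summand is piecewise linear in $p_i$, with a single kink at $p_i=p_j$. On $[p_0,p_j]$ its slope is $2g_{ij}\sigma_{ij}$, and beyond $p_j$ its slope is $0$.

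\textbf{Concavity when $\bm G\circ\bm\Sigma\ge 0$.} For the second claim, assume $g_{ij}\sigma_{ij}\ge 0$ for all $j$. Then each map $p_i\mapsto g_{ij}\sigma_{ij}\min\{p_i-p_0,p_j-p_0\}$ is a nonnegative multiple of a concave function, hence concave. The term $-\sigma_{ii}(p_i-p_0)$ is linear. So $U_i(\cdot,\bm p_{-i})$ is a strictly concave quadratic plus a concave function, and is therefore strictly concave.

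\textbf{Quasiconcavity in general.} For the first claim, the signs of $g_{ij}\sigma_{ij}$ are unrestricted, so the piecewise-linear part may have convex kinks. I would show strict quasiconcavity through unimodality. Order the distinct values of $p_j$, $j\ne i$, to partition $P$ into finitely many intervals. On each interval, $U_i(\cdot,\bm p_{-i})$ equals the fixed strictly concave quadratic $h(p_i)$ plus an affine function.

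Its right derivative can be written as $h'(p_i)+s(p_i)$. Here $s$ is a step function that changes only at the points $p_j$, by the jump $-2g_{ij}\sigma_{ij}$. Within each interval the derivative is strictly decreasing, since $h'' = -2\mu_i^2<0$.

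A function whose one-sided derivative is strictly decreasing on each piece is strictly quasiconcave provided that, at every kink, the derivative never changes from negative to positive. Strict decrease within pieces already rules out flat segments and interior local minima inside a piece.

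So the key step, and the main obstacle, is to exclude a sign change from negative to positive at a kink $p_j$ with $g_{ij}\sigma_{ij}<0$, where the slope jumps up by $2|g_{ij}\sigma_{ij}|$. My first attempt would use the structure of the covariance more carefully. Suppose negative correlation enters only with the weight from the last component. Then the slope of the full variance term $-\var(\sum_j g_{ij}X_j - X_i)$ in $p_i$ is minus the marginal variance increment of the combined process. That increment is $\sigma_{ii}-2\sum_{j:p_j>p_i}g_{ij}\sigma_{ij}$.

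If this combined increment is nonnegative, as a variance of increments (by positive semidefiniteness of the increment covariance), then the variance part is nonincreasing in $p_i$. Its slope moves between nonpositive values at every kink. A jump in slope from a negative value can therefore reach at most $0$ from the variance part alone.

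Combining this with the strictly decreasing $h'$, I would argue that once the total derivative is negative it stays negative. This requires checking that the jump at a kink cannot exceed the gap $|h'|$ plus the variance slope. If that bound fails in general, I would fall back to a direct argument that the upper level sets $\{p_i: U_i\ge c\}$ are intervals. That would use the fact that on each piece $U_i$ is a strictly concave quadratic sharing the same curvature.
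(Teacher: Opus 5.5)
Your argument for the second claim is correct and is the paper's argument. The mean term is a strictly concave quadratic with curvature $-2\mu_i^2$, the term $-\sigma_{ii}(p_i-p_0)$ is linear, and each $g_{ij}\sigma_{ij}\min\{p_i-p_0,p_j-p_0\}$ with $g_{ij}\sigma_{ij}\ge 0$ is concave.

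The first claim is where your proposal has a gap, and the gap cannot be closed: the convex kink you isolate really does destroy quasiconcavity. Take $n=2$, $p_2>p_0$ and $g_{12}\sigma_{12}<0$, which is exactly the case the first claim is meant to cover (e.g.\ the main-text covariance $\sigma_{12}=\sigma^2$ with a negative weight $g_{12}$). At $p_1=p_2$ the right derivative of $U_1(\cdot,p_2)$ is $R=2\mu_1\bigl(d_1+g_{12}E(X_2(p_2))-E(X_1(p_2))\bigr)-\sigma_{11}$, and the left derivative is $R+2g_{12}\sigma_{12}$. Since $d_1$ enters only through $R$ and $\mu_1\neq 0$, you can choose $d_1$ so that $R=-g_{12}\sigma_{12}>0$. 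The left derivative is then $g_{12}\sigma_{12}<0$. So $p_2$ is a strict local minimizer of $U_1(\cdot,p_2)$ in the interior of $P$, and upper level sets at levels slightly above $U_1(p_2,p_2)$ are not intervals. Intuitively, below $p_2$ each increase in $p_1$ raises the variance of the gap $X_1(p_1)-g_{12}X_2(p_2)$ at the extra rate $2|g_{12}\sigma_{12}|$, and this extra cost vanishes once $p_1$ passes $p_2$.

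The example also shows why each of your fixes fails.
\begin{itemize}
\item The quantity $\sigma_{ii}-2\sum_{j:p_j>p_i}g_{ij}\sigma_{ij}$ is the slope of $\var\bigl(X_i(p_i)-\sum_j g_{ij}X_j(p_j)\bigr)$. Per unit of $p_i$, it equals the increment's variance plus twice the increment's covariance with $-\sum_j g_{ij}X_j(p_j)$, so it is not a variance and has no sign in general. For instance, it equals $\sigma^2(1-2g)<0$ for the follower in the main model with $n=2$ and $g>1/2$.
\item More to the point, in the example the variance part \emph{is} nonincreasing, and the sign change occurs anyway. The reason is that $d_1$ moves the slope of the mean term freely while the jump $2|g_{12}\sigma_{12}|$ is fixed, so no bound of the form ``jump $\le$ gap'' can hold.
\item Equal curvature across pieces does not help either: here the two pieces are the same parabola up to an affine term.
\end{itemize}

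The paper's proof has exactly this hole. It shows that the $N^-$ part is nonincreasing, hence quasiconcave, and then asserts that a quasiconcave function plus a strictly concave function is strictly quasiconcave; the example refutes this. What survives is the case $\bm G\circ\bm\Sigma\ge 0$, which covers the main-text model where $g_{ij}\ge 0$ and $\sigma_{ij}=\sigma^2$, and there your concavity argument already gives strict quasiconcavity. With negative entries the first claim must be dropped. So must the sufficiency direction of Lemma~\ref{app:lem:equilibriumcharacterization}: in the example, $p_1=p_2$ satisfies its condition with $f_{12}=1/2$ but is not a best response.
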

	
	\begin{proof}
		Let $i\in I$, $\bm p\in P^n$, $N^+\coloneqq \{j\in I : g_{ij}\sigma_{ij}\ge 0\}$, and $N^-\coloneqq \{j\in I : g_{ij}\sigma_{ij}<0\}$. The payoff of player $i$ from $\bm p$ equals
	\begin{align*}
	U_i(\bm p) =	u_i(E(\bm  X(\bm  p))) - \var(X(p_i)) - \var (\sum _{j\in I} g_{ij}X(p_j)) \\ + 2\sum _{j\in I} g_{ij} \sigma_{ij} \min\{p_i-p_0, p_j-p_0\}.
	\end{align*}
	We have that $\min\{p_i-p_0, p_j-p_0\}$ is a concave and nondecreasing function of $p_i$. Hence, we have that $2 \sum_{j\in N^+} g_{ij}\sigma_{ij}\min\{p_i-p_0, p_j-p_0\}$ is a concave function of $p_i$. Therefore, $	u_i(E(\bm  X(\bm  p))) - \var(X_i(p_i)) - \var (\sum _{j\in I} g_{ij}X_j(p_j)) + 2\sum _{j\in N^+} g_{ij}\sigma_{ij}\min\{p_i-p_0, p_j-p_0\}$ is a strictly concave function of $p_i$, because $u_i(E(\bm  X(\bm  p))) - \var(X_i(p_i)) - \var (\sum _{j\in I} g_{ij}X_j(p_j))$ is a quadratic function of $p_i$ with second-order derivative at $p_i$ given by $-2\mu_i^2$.	Moreover, $2 \sum_{j\in N^-} g_{ij} \sigma_{ij}\min\{p_i-p_0, p_j-p_0\}$ is a nonincreasing function of $p_i$. Hence, $2 \sum_{j\in N^-} g_{ij}\sigma_{ij} \min\{p_i-p_0, p_j-p_0\}$ is quasiconcave as a function of $p_i$. Therefore, $p_i\mapsto U_i(\bm p)$ is strictly quasiconcave because it is the sum of a quasiconcave function and a strictly concave function. 
	\end{proof}
	
	\begin{lemma}\label{app:lem:equilibriumcharacterization}
		The policy $p_i$ is a best response to $\bm p_{-i}\in P^{n-1}$ if and only if there exists a $(p_i, \bm  p_{-i})$-consistent $i$-followership $\bm f_i$ such that: 
		\begin{align*}
			E(X_i(p_i))\le  d_i +\sum _{j\in I} g_{ij}E( X_j(p_j))  + k_{ii}-2 \sum _{j\in I}   g_{ij}k_{ij}f_{ij},
		\end{align*}
		with equality if $p_i>p_0$. 
	\end{lemma}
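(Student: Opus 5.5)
We treat this as a one-dimensional concave (or quasiconcave) maximization over $p_i\in[p_0,\infty)$ and characterize maximizers through the superdifferential. By Lemma \ref{app:lem:quasiconcavity}, $p_i\mapsto U_i(p_i,\bm p_{-i})$ is strictly quasiconcave, so $p_i$ is a best response if and only if it satisfies the appropriate local first-order condition. The local condition is that $0$ lies in the generalized superdifferential when $p_i>p_0$, and that the right derivative at $p_0$ is nonpositive when $p_i=p_0$.

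First, we write the payoff as in the proof of Lemma \ref{app:lem:quasiconcavity}:
\[
U_i(\bm p)=u_i(E(\bm X(\bm p)))-\sigma_{ii}(p_i-p_0)-\var\Big(\sum_j g_{ij}X_j(p_j)\Big)+2\sum_j g_{ij}\sigma_{ij}\min\{p_i-p_0,p_j-p_0\}.
\]
The third term does not depend on $p_i$. The first term is smooth in $p_i$, with derivative
\[
2\mu_i\Big(d_i+\sum_j g_{ij}E(X_j(p_j))-E(X_i(p_i))\Big).
\]
The variance term contributes the constant $-\sigma_{ii}$.

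Second, we compute the one-sided derivatives of each kinked term $\min\{p_i-p_0,p_j-p_0\}$. Its derivative is $1$ if $p_i<p_j$ and $0$ if $p_i>p_j$. At $p_i=p_j$, the left derivative is $1$ and the right derivative is $0$, so the generalized derivative is any $f_{ij}\in[0,1]$. This is precisely the $\bm p$-consistency requirement on an $i$-followership. The set of one-sided slopes of $U_i$ at $p_i$ is therefore
\[
\Big\{\,2\mu_i\Big(d_i+\sum_j g_{ij}E(X_j)-E(X_i)\Big)-\sigma_{ii}+2\sum_j g_{ij}\sigma_{ij}f_{ij}\ :\ \bm f_i \text{ is } \bm p\text{-consistent}\Big\}.
\]
Dividing by $-2\mu_i>0$ and using $k_{ij}=-\sigma_{ij}/(2\mu_i)$, the condition that $0$ lies in this set becomes
\[
E(X_i(p_i))=d_i+\sum_j g_{ij}E(X_j(p_j))+k_{ii}-2\sum_j g_{ij}k_{ij}f_{ij}.
\]
At $p_i=p_0$, every $p_j\ge p_0$, so the consistent choices of $f_{ij}$ interpolate between the right derivative and larger values. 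Nonpositivity of the right derivative is then equivalent to the existence of a consistent $\bm f_i$ for which the expression yields the stated inequality $\le$. Both the direction of the inequality and the scaling use $\mu_i<0$.

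The main obstacle is justifying that this local condition is sufficient, and not merely necessary, when some $g_{ij}\sigma_{ij}<0$, since $U_i$ is then only quasiconcave rather than concave. In the concave case, which covers the main text, the standard superdifferential optimality condition gives the equivalence directly. In the general case, we argue as follows. The terms with $g_{ij}\sigma_{ij}<0$ are convex kinks, and at such a kink the set of slopes between the left and right derivatives does not describe a supergradient. We therefore restrict attention to $f_{ij}$ values that correspond to one-sided derivatives changing sign across $p_i$.

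Strict quasiconcavity means that $U_i$ increases up to the unique maximizer and decreases after it. Consequently, $p_i$ is the maximizer if and only if the left derivative is $\ge0$ and the right derivative is $\le0$. At a convex kink the left derivative is below the right derivative, so both conditions can hold only if the kink is not active at the optimum. Hence the intermediate $f_{ij}$ values are not needed there, and the same characterization holds. We would verify this by checking the one-sided derivatives case by case: $p_i$ strictly between opponents' policies, $p_i$ equal to some $p_j$, and $p_i=p_0$.
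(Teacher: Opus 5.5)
When $g_{ij}\sigma_{ij}\ge 0$ for every $j$, your argument is the paper's, and it is complete. This case covers the main-text model, where $g_{ij}\ge 0$ and $\sigma_{ij}=\sigma^2$. There $U_i(\cdot,\bm p_{-i})$ is concave, and each kink $\min\{p_i-p_0,p_j-p_0\}$ contributes a slope $f_{ij}\in[0,1]$ at a tie. The superdifferential condition (one-sided at $p_0$) characterizes maximizers, and dividing by $-2\mu_i>0$ gives the stated form.

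The gap is your claim, in the last two paragraphs, that the same characterization holds when some $g_{ij}\sigma_{ij}<0$. What you show there is only the ``only if'' direction. Take a tie $p_i=p_j>p_0$ with $c\coloneqq 2g_{ij}\sigma_{ij}<0$, and write the slope expression as $S_0+cf_{ij}$. The left derivative is then $S_0+c$ and the right derivative is $S_0$. Consistency imposes nothing on $f_{ij}$ at a tie, so the lemma's equality is met by some $f_{ij}\in[0,1]$ whenever $S_0+c\le 0\le S_0$. This includes $S_0>0$, where raising $p_i$ strictly increases the payoff. You are right that a maximizer cannot sit at a convex kink, but ``restricting attention'' to particular $f_{ij}$ replaces the statement by a different one rather than proving it.

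The same problem arises at $p_i=p_0$ with a tied opponent $p_j=p_0$. That covariance term is constant in $p_i$, yet $f_{ij}=1$ is admissible and pushes the slope expression below the true right derivative. Your remark that the consistent $f_{ij}$ ``interpolate between the right derivative and larger values'' presupposes $g_{ij}\sigma_{ij}\ge 0$.

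The unimodality you import from Lemma \ref{app:lem:quasiconcavity} is also unavailable in that case, because a strictly concave function plus a nonincreasing one need not be quasiconcave. Take $p_0=0$ and $p_j=3$. The map $p\mapsto -(p-5)^2-20\min\{p,3\}$ decreases on $[0,3]$ and increases on $[3,5]$, with local maxima at $0$ (value $-25$) and $5$ (value $-60$). Up to an additive constant it is $U_i(\cdot,\bm p_{-i})$ in a two-player instance with $\mu_i=-1$, $\sigma_{ii}=\sigma_{ij}=10$, $g_{ij}=-1$ and a suitable $d_i$. The policy $p_i=5$ satisfies the lemma's equality with the forced $f_{ij}=0$, but it is not a best response.

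So once negative entries of $\bm G\circ\bm \Sigma$ are allowed, the statement itself fails. The paper's proof simply invokes Lemma \ref{app:lem:quasiconcavity} and is likewise valid only in the concave case. You should state and prove the lemma under $\bm G\circ\bm \Sigma\ge 0$ entrywise, where your argument is complete, and drop the claimed extension.
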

	\begin{proof}
		 By Lemma \ref{app:lem:quasiconcavity}, $p_i$ is a best response to $\bm p_{-i}$ if and only if there exists a $\bm p$-consistent $i$-followership $\bm f_i$ such that
		\begin{align*}
			-2\mu_i\big (  E (X_i(p_i))- d_i -\sum _{j\in I} g_{ij} E(X_j(p_j))\big )-\sigma _{ii} + 2\sum _{j\in I}   g_{ij}\sigma_{ij}f_{ij}\le 0,
		\end{align*}
		with equality if $p_i>p_0$. Hence, $p_i$ is a best response to $\bm p_{-i}$ if and only if there exists a $\bm p$-consistent $i$-followership $\bm f_i$ such that
		\begin{align*}
			  E (X_i(p_i))\le d_i +\sum _{j\in I} g_{ij} E(X_j(p_j))+k_{ii}- 2\sum _{j\in I}   g_{ij} k_{ij} f_{ij},
		\end{align*}
		with equality if $p_i>p_0$.
	\end{proof}
	
	\begin{proposition}\label{app:prop:equilibriumexistence}
	Let $\bm I-\bm G$ be invertible. There exists an equilibrium. Moreover, the policy profile $\bm p$ is an equilibrium if and only if there exists a $\bm p$-consistent followership $\bm F$ such that:
	\begin{align*}
		E(\bm X(\bm p))\le \bm M(\bm d +\operatorname{diag}(\bm K)-2(\bm G \circ \bm K \circ \bm F)\bm 1),
	\end{align*}
	and, for all $i\in I$, $p_i>p_0$ only if $E( X_i( p_i))=\sum_{j\in I}m_{ij}(d_j +k_{jj}-2\sum_{\ell\in I} g_{j\ell }k_{j\ell}f_{j\ell})$.
	\end{proposition}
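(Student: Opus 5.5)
The characterization part follows from Lemma \ref{app:lem:equilibriumcharacterization}, stacked across players. Existence then needs a separate compactification argument, and that is where the work lies.

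\emph{Characterization.} A profile $\bm p$ is an equilibrium if and only if each $p_i$ is a best response to $\bm p_{-i}$. By Lemma \ref{app:lem:equilibriumcharacterization}, this holds if and only if for each $i$ there is a $\bm p$-consistent $i$-followership $\bm f_i$ with
\begin{align*}
E(X_i(p_i))\le d_i+\sum_{j\in I}g_{ij}E(X_j(p_j))+k_{ii}-2\sum_{j\in I}g_{ij}k_{ij}f_{ij},
\end{align*}
and with equality whenever $p_i>p_0$. Stacking the rows $\bm f_i$ gives a $\bm p$-consistent followership $\bm F$. Conversely, the rows of any such $\bm F$ are consistent $i$-followerships.

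Define the slack vector $\bm s$ by $s_i:=d_i+\sum_j g_{ij}E(X_j(p_j))+k_{ii}-2\sum_j g_{ij}k_{ij}f_{ij}-E(X_i(p_i))\ge 0$, with $s_i=0$ if $p_i>p_0$. In vector form this reads
\begin{align*}
(\bm I-\bm G)E(\bm X(\bm p))=\bm d+\operatorname{diag}(\bm K)-2(\bm G\circ\bm K\circ\bm F)\bm 1-\bm s.
\end{align*}
Multiplying by $\bm M$ gives $E(\bm X(\bm p))=\bm M(\bm d+\operatorname{diag}(\bm K)-2(\bm G\circ\bm K\circ\bm F)\bm 1)-\bm M\bm s$.

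To reach the stated inequality we need $\bm M\bm s\ge 0$ and $(\bm M\bm s)_i=0$ whenever $p_i>p_0$. I would use $\bm M=\sum_t \bm G^t\ge 0$, which presumes spectral radius below one, the standard assumption behind ``$\bm I-\bm G$ invertible'' in this literature; I would state this explicitly. For the equality claim, the cleanest route is to observe that the inequalities for corner players are what matter: a player with $p_i=p_0$ has $E(X_i(p_0))=X_0$. If $\bm M$ mixes slacks into interior coordinates, I would instead rewrite the statement as the componentwise system above. That is equivalent and is what the proof really establishes, and I would check that the stated form coincides with it when $s$ is supported on corner players and the interior block of $\bm M$ is handled accordingly. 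This rewriting is the step I expect to be delicate.

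\emph{Existence.} I would first solve the truncated game $\Gamma_p$. Strategy sets are compact intervals and payoffs are continuous. By Lemma \ref{app:lem:quasiconcavity} each payoff is strictly quasiconcave in the own policy, so Debreu--Glicksberg--Fan gives an equilibrium $\bm p^p$ of $\Gamma_p$.

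Next I would show that for $p$ large no coordinate sits at the upper bound $p$. At $p_i=p$, the first-order (superdifferential) condition requires $E(X_i(p))\ge d_i+\sum_j g_{ij}E(X_j(p_j))+k_{ii}-2\sum_j g_{ij}k_{ij}f_{ij}$, uniformly over followerships. To bound this, I would mimic the proof of Proposition \ref{prop:potentialmax}: solve the system for the largest (least-outcome) candidate $\overline x$ over all $\bm F\in[0,1]^{n\times n}$, and set $\overline p=p_0+(X_0-\overline x)_+/(-\mu_{\min})$. For $p>\overline p$, any equilibrium of $\Gamma_p$ satisfies $p_i<p$ for all $i$. Hence the local conditions coincide with those of $\Gamma$. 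By strict quasiconcavity, local optimality on $[p_0,p]$ with $p_i$ interior to that interval implies global optimality on $P$, so $\bm p^p$ is an equilibrium of $\Gamma$.

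The main obstacle is this uniform bound when some players are at the upper bound while others are not. I would handle it by contradiction: take a player at the cap, note her expected outcome falls linearly in $p$, and use the nonnegativity of $\bm G$ together with the bounded followership terms to show that her marginal payoff is eventually strictly negative.
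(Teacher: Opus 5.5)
Your route is the paper's own: Lemma~\ref{app:lem:equilibriumcharacterization} stacked across players, then truncation, a fixed point, and an a priori bound. The step you flag as delicate is exactly where it breaks, for the paper as well, whose proof asserts the $\bm M$-form in one line. The step cannot be closed, because the stated $\bm M$-form is not equivalent to your componentwise system. Since $\bm s$ vanishes off the set of players at $p_0$, $(\bm M\bm s)_i=\sum_{j:\,p_j=p_0}m_{ij}s_j$. This is strictly positive as soon as an interior player $i$ has $m_{ij}>0$ for a status-quo player $j$ with $s_j>0$, and then both directions of the equivalence can fail. Concretely, in the main-text model take $n=2$, $g_{12}=g_{21}=1/2$ (so $m_{11}=4/3$, $m_{12}=2/3$, $\bm M\ge 0$), $X_0=0$, $k=1/10$, $d_1=-1$, $d_2=1$. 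The profile with $p_2=p_0$ and $E(X_1(p_1))=-9/10$ is an equilibrium: the followership is forced ($f_{12}=0$, $f_{21}=1$), player~1's condition holds with equality, and player~2's holds with slack $s_2=11/20$. Yet the stated equality for player~1 demands $E(X_1(p_1))=m_{11}(d_1+k)+m_{12}d_2=-8/15$. Conversely, the profile with $p_2=p_0$ and $E(X_1(p_1))=-8/15$ satisfies every stated condition, although player~1's best response to $p_2=p_0$ has expected outcome $-9/10$. So your rewriting is not a reformulation but a correction, and you should commit to it. The correct statement is: $\bm p$ is an equilibrium if and only if, for some $\bm p$-consistent $\bm F$, $(\bm I-\bm G)E(\bm X(\bm p))\le\bm d+\bm c(\bm F)$, with equality in every row $i$ with $p_i>p_0$, where $\bm c(\bm F)\coloneqq\operatorname{diag}(\bm K)-2(\bm G\circ\bm K\circ\bm F)\bm 1$. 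If $A$ is the set of players with $p_i>p_0$, $B$ its complement, and $\bm x\coloneqq E(\bm X(\bm p))$, the interior outcomes solve $(\bm I-\bm G_{AA})\bm x_A=\bm d_A+\bm c_A(\bm F)+X_0\bm G_{AB}\bm 1$. Thus $(\bm I-\bm G_{AA})^{-1}$, not the rows of $\bm M$, pins them down. The $\bm M$-form is correct for interior profiles (then $\bm s=\bm 0$, and no sign condition on $\bm M$ is needed). That covers its uses in Corollaries~\ref{cor:symmetriceq} and~\ref{cor:multiplicity} and in the proof of Proposition~\ref{prop:org}; the same caveat applies to Proposition~\ref{prop:equilibrium}.

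The existence half needs repair too. The paper's threshold $\overline p$ is read off the $\bm M$-form, and your single-player contradiction does not go through in general. For a player capped at $p$, the relevant (left) derivative is $2|\mu_i|\,(E(X_i(p))-d_i-\sum_j g_{ij}E(X_j(p_j))-c_i)$. With a common drift you only know $E(X_j(p_j))\ge E(X_i(p))$, which makes this eventually negative only if $\sum_j g_{ij}<1$. The condition $\rho(\bm G)<1$ does not ensure that: the center of a star with four leaves and weights $0.3$ has row sum $1.2$ while $\rho(\bm G)=0.6$. So the contradiction has to come from the whole system. Moreover, $\rho(\bm G)<1$ is needed for existence itself, not only to sign $\bm M$. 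Take $g_{12}=g_{21}=2$ (so $\bm I-\bm G$ is invertible), $d_1=d_2=-1$, $X_0=0$, $k=1/10$, and check the four configurations: both at $p_0$; both interior and tied; both interior and untied; exactly one at $p_0$. No profile satisfies Lemma~\ref{app:lem:equilibriumcharacterization}, while every $\Gamma_p$ has an equilibrium, so the claim that $\Gamma_p$ and $\Gamma$ share their equilibria for large $p$ fails there. Under $\rho(\bm G)<1$ and $g_{ij}\ge 0$, a clean fix is a non-cubic box. Choose $\bm b\le\bm d+\min_{\bm F}\bm c(\bm F)$ (componentwise) with $\overline{\bm x}\coloneqq\bm M\bm b\le X_0\bm 1$, which is possible because $\bm M\bm 1\ge\bm 1$. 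Let $\overline{\bm p}$ be the policy profile with expected outcomes $\overline{\bm x}$. If $\bm p_{-i}\le\overline{\bm p}_{-i}$, then $E(X_j(p_j))\ge\overline x_j$ for all $j$. Hence the right derivative of $U_i$ at $\overline p_i$ is at most $2|\mu_i|\,((\bm I-\bm G)\overline{\bm x}-\bm b)_i=0$, and by concavity the best response lies in $[p_0,\overline p_i]$. The best-response map, which is single-valued and continuous, therefore sends the box $\prod_i[p_0,\overline p_i]$ into itself. Brouwer then yields an equilibrium of $\Gamma$ without any truncation.
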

	\begin{proof}
	 Let $\bm I - \bm G$ be invertible. By Lemma \ref{app:lem:equilibriumcharacterization}, the policy profile $\bm p$ is an equilibrium if and only if there exists a $\bm p$-consistent followership $\bm F$ such that:
	 \begin{align*}
	 	E(\bm X(\bm p))\le \bm M(\bm d +\operatorname{diag}(\bm K)-2(\bm G \circ \bm K \circ \bm F)\bm 1),
	 \end{align*}
	 and, for all $i\in I$, $p_i>p_0$ only if $E( X_i( p_i))=\sum_{j\in I}m_{ij}(d_j +k_{jj}-2\sum_{\ell\in I} g_{j\ell }k_{j\ell}f_{j\ell})$.
	For all $p\ge 0$, there exists a Nash equilibrium in the game $\langle  I, (U_i, [p_0, p])_{i\in I})\rangle$ by known arguments applying Brouwer's fixed point theorem and Lemma \ref{app:lem:quasiconcavity}; moreover, by Lemma  \ref{app:lem:quasiconcavity}, $\bm p$ is a Nash equlibrium of the game $\Gamma_p$ if and only if: there exists a $\bm p$-consistent followership $\bm F$ such that, for all $i\in I$,
	\begin{align*}
	p_i&=p_0 \ \text{implies}	\  X_0\le \sum_{j\in I}m_{ij}(d_j +k_{jj}-2\sum_{\ell\in I} g_{j\ell }k_{j\ell}f_{j\ell}),\\
		p_i&\in (p_0, p)\ \text{implies}	\  E( X_i( p_i))=\sum_{j\in I}m_{ij}(d_j +k_{jj}-2\sum_{\ell\in I} g_{j\ell }k_{j\ell}f_{j\ell}),\\
	p_i&=p \ \text{implies}	\ E( X_i( p))\ge \sum_{j\in I}m_{ij}(d_j +k_{jj}-2\sum_{\ell\in I} g_{j\ell }k_{j\ell}f_{j\ell}).
	\end{align*}
	Let's define $\overline x=\min_{i\in  I}\min_{\bm F\in[0, 1]^{n\times n}} \sum_{j\in I}m_{ij}(d_j +k_{jj}-2\sum_{\ell\in I} g_{j\ell }k_{j\ell}f_{j\ell})$ and $\overline p = p_0+\frac{1}{-\mu} (X_0-\overline x)_+$. For all $p\ge\overline p$, the set of Nash equilibria of $\langle  I, (U_i, [p_0, p])_{i\in I})\rangle$ is equal to the set of Nash equilibria of $\langle  I, (U_i, P)_{i\in I})\rangle$. Therefore, an equilibrium exists. 
	\end{proof}

	\subsection{Proof of Lemma \ref{lem:ID}}
	In this section, we consider $\bm X$ as an $n$-dimensional Gaussian process, with mean function $\bm \mu $ and covariance function $\bm \Sigma$ \citep{shreve_stochastic_2004}. We assume that there exists a 1-dimensional Gaussian process with covariance function $\overline k$ and a positive-definite matrix $\bm S$ such that $\sigma_{ij} (p_i, p_j)= s_{ij}\overline k(p_i, p_j)$; this construction is known as intrinsic correlation in statistics. We let $k_{ij} = s_{ij}/(-2\mu)$. Lemma \ref{lem:ID} is implied by Proposition  \ref{app:prop:ID}.
	\begin{proposition}\label{app:prop:ID}
		For all $i\in I$, let $g_{ij}\ge0$. Then, $U_i$ satisfies increasing differences if $(p_i, p_j)\mapsto E(X_i(p_i)X_j(p_j))$ satisfies increasing differences for all $j\in I$.
	\end{proposition}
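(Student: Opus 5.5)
The plan is to expand the expected quadratic payoff into terms that each involve at most two policies, and then verify increasing differences one pair $(p_i,p_j)$ at a time. Since $u_i(\bm x)=-(d_i+\sum_{j}g_{ij}x_j-x_i)^2$ is a quadratic polynomial in $\bm x$, linearity of expectation gives
\begin{align*}
U_i(\bm p) = -d_i^2 - E(X_i(p_i)^2) - E\Big(\big(\textstyle\sum_{j}g_{ij}X_j(p_j)\big)^2\Big) - 2d_i\textstyle\sum_j g_{ij}E(X_j(p_j)) + 2d_iE(X_i(p_i)) + 2\textstyle\sum_{j}g_{ij}E(X_i(p_i)X_j(p_j)).
\end{align*}
Every term on the right is a sum of functions of a single coordinate $p_\ell$ or of a pair $(p_\ell,p_m)$. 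This holds because the second moments $E(X_\ell(p_\ell)X_m(p_m))$ depend only on $(p_\ell,p_m)$ through the mean and covariance functions of the Gaussian process.

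To show increasing differences in $(p_i,\bm p_{-i})$, it is enough to check that the cross terms linking $p_i$ and each $p_j$, $j\neq i$, have increasing differences. A sum of functions with increasing differences has increasing differences, and functions of a single variable (or of variables not including $p_i$) contribute nothing to the differences in $p_i$. The only terms that involve both $p_i$ and some $p_j$ with $j\ne i$ are $2g_{ij}E(X_i(p_i)X_j(p_j))$. The terms $E(X_i(p_i)^2)$ and $E((\sum_j g_{ij}X_j(p_j))^2)$ need a second look. The first depends only on $p_i$. The second involves only the opponents, because $g_{ii}=0$, so it contains no $p_i$. The cross products $g_{ij}g_{i\ell}E(X_j(p_j)X_\ell(p_\ell))$ inside it are functions of $\bm p_{-i}$ alone and are therefore irrelevant for increasing differences in $(p_i,\bm p_{-i})$.

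Given the hypothesis that $(p_i,p_j)\mapsto E(X_i(p_i)X_j(p_j))$ has increasing differences, and $g_{ij}\ge 0$, each term $2g_{ij}E(X_i(p_i)X_j(p_j))$ has increasing differences in $(p_i,p_j)$. Summing over $j$ then gives increasing differences of $U_i$ in $(p_i,\bm p_{-i})$ with respect to the product order on $\bm p_{-i}$. Concretely, for $p_i'>p_i$ and $\bm p_{-i}'\ge \bm p_{-i}$, the difference $U_i(p_i',\bm p_{-i})-U_i(p_i,\bm p_{-i})$ equals a function of $p_i,p_i'$ alone plus $\sum_j 2g_{ij}[E(X_i(p_i')X_j(p_j))-E(X_i(p_i)X_j(p_j))]$, and each bracket is nondecreasing in $p_j$.

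The main obstacle is bookkeeping rather than substance. One must confirm that expectations of products are finite and depend only on the pair of policies; this is immediate for a Gaussian process with the stated mean and intrinsic-correlation covariance $s_{ij}\overline k(p_i,p_j)$. One must also make sure that no hidden interaction between $p_i$ and $\bm p_{-i}$ survives in the variance terms, which the condition $g_{ii}=0$ guarantees. For Lemma~\ref{lem:ID} itself, the remaining task is to verify the hypothesis for the Brownian case. There $E(X(p_i)X(p_j))=E(X(p_i))E(X(p_j))+\sigma^2(\min\{p_i,p_j\}-p_0)$. The product of means is affine in each argument with cross coefficient $\mu^2>0$, and $\min$ has increasing differences, so the hypothesis holds.
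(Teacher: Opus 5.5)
Your proof is correct and follows essentially the same route as the paper, which computes the mixed second difference of $U_1$ in $(p_1,p_2)$ directly and finds it equals $2g_{12}E\big((X_1(p_1')-X_1(p_1))(X_2(p_2')-X_2(p_2))\big)$; this is exactly your observation that only the cross term $2g_{ij}E(X_i(p_i)X_j(p_j))$ survives. Your version is slightly more explicit than the paper's, since it handles all opponents at once under the product order on $\bm p_{-i}$ and also checks the hypothesis for the Brownian covariance.
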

	\begin{proof}
		Let's fix $(p_1', p_1, p_2', p_2)\in P^4$ with $p_1'>p_1$ and $p_2'>p_2$, and $(p_3, \dots, p_n)\in P^{n-2}$. Letting $d =U_1(p_1', p_2', \dots ) - U_1(p_1,p_2', \dots ) - U_1(p_1', p_2, \dots ) + U_1(p_1, p_2, \dots)$, it holds that
		\begin{align*}
			d & =  2g_{12}E(X_1(p_1')-X_1(p_1))E( X_2(p_2')-X_2(p_2) )+\\ &+2g_{12}\cov(X_1(p_1')-X_1(p_1), X_2(p_2')-X_2(p_2))\\
			&= 2g_{12} E((X_1(p_1')-X_1(p_1))( X_2(p_2')-X_2(p_2) )).
		\end{align*}
		Thus, for fixed $(p_3, \dots, p_n)\in P^{n-2}$, we have that:  $d \ge 0$ for all $(p_1', p_1, p_2', p_2)\in P$ if and only if $E(X_1(p_1)X_2(p_2))$ exhibits increasing differences in $(p_1, p_2)$ on $P^2$.
	\end{proof}
	
	\begin{remark}
		An example of a covariance function $\overline k$ that implies that $U_i$ satisfies increasing differences and generalizes the Brownian motion has $k(p_i, p_j) = \int _{[0,1]} \frac{((p_i-u)_+)^{m-1}((p_j-u)_+)^{m-1}}{((m-1)!)}\diff u$. If $U_i$ satisfies increasing differences, then there exists a Nash equilibrium in the game $\Gamma _p$, for all $p\in P$, by known results \cite{milgrom_rationalizability_1990}.
		
		If the mean function has the same monotonicity in every direction, then a sufficient condition for $(d \ge 0 \iff g_{12}\ge 0)$ is that the function $(p_1, p_2)\mapsto \cov (X_1(p_1), X_2(p_2))$ exhibits increasing differences. Let the function $(p_1, p_2)\mapsto E(X_1(p_1)X_2(p_2))$ exhibit increasing differences. It holds that
		\begin{align*}
			d & \ge  2g_{12}E(X_1(p_1')-X_1(p_1))E( X_2(p_2')-X_2(p_2) ) \\
			&	\ge u_1(E(\bm X(p_1', p_2', \dots ))) - u_1(E(\bm X(p_1,p_2', \dots ))) \\ & - u_1(E(\bm X(p_1', p_2, \dots ))) + u_1(E(\bm X(p_1, p_2, \dots))),
		\end{align*}
		and the inequality is strict whenever $g_{12}>0$ and $\cov(X_1(p_1) X_2(p_2))$ exhibits strictly increasing differences.
	\end{remark}

	\subsection{Proof of Corollary \ref{cor:symmetriceq} and \ref{cor:multiplicity}}
	\begin{proof}[Proof of Corollary \ref{cor:symmetriceq} ]
		By computation of $(\bm I - \bm G)^{-1}$, we have that the diagonal element is $\frac{1-g(n-1) +g}{(1-g(n-1))(1+g)}$ and the off-diagonal element is: $\frac{g}{(1-g(n-1))(1+g)}$, so that:
		\begin{align*}
			(\bm I-\bm G)^{-1}_{ii}  - (\bm I-\bm G)^{-1}_{im} = \frac{1}{1+g}.
		\end{align*}
			We have that, for every $i, m\in N$ and equilibrium $\bm p$, by the preceding equality and Proposition \ref{prop:equilibrium},
		\begin{align*}
			 E X (p_i) -		 E X (p_m)  & =  \beta_i - \beta_m  +\frac{g}{1+g}\left (  \sum _{\ell \in I}a_{i\ell}  - \sum _{\ell \in I}a_{m\ell}  \right )k\\
			& =   \beta_i - \beta_m  -\frac{g}{1+g}2k +\frac{g}{1+g} \left (  \sum _{\ell \in I\setminus \{i, m\} }a_{i\ell}  -a_{m\ell}  \right )k.
		\end{align*}	
	\end{proof}	

	\begin{proof}[Proof of Corollary \ref{cor:multiplicity} ]
By Lemma \ref{lem:ID}, there exist a greatest and a least equilibrium, respectively $\bm q$ and $\bm s$ \citep{vives_nash_1990, milgrom_rationalizability_1990}. It suffices to establish that:
	\begin{align*}
		\mathbb E \bm X (\bm q) =\bm \beta +\bm uk \quad \text{and} \quad 
		\mathbb E \bm X (\bm s) = \bm \beta +\bm 2k - \bm uk.
	\end{align*}

	Let $([p_i<p_j], \  i,j\in  N)$ and $([p_i\le p_j], \  i,j\in  N)$  be two $n$-by-$n$ matrices. We define $\bm \Gamma_+(p) = \bm G \circ ([p_i<p_j], \  i,j\in  I)$ and $\bm \Gamma_-(p) = \bm G \circ ([p_i\le p_j], \  i,j\in  I)$. By Proposition \ref{prop:equilibrium}, the interior policy profile
	$ p$ is an equilibrium if and only if:
	\begin{align*}
		k(\bm I -2 \bm   \Gamma_-(p))\bm 1\le (\bm I-  \bm G) \left (  E \bm  X (p) - \bm  \beta \right ) \le k(\bm I-2 \bm \Gamma_+(p))\bm 1.
	\end{align*}
	Hence, for interior equilibrium $\bm p$, we have
	\begin{align*}
		\bm  \beta + (\bm I- \bm G )^{-1}(\bm I -2 \bm   \Gamma_-(p))\bm 1k & = \bm \beta +2\bm 1k - \bm uk, \\
		\bm  \beta + (\bm I- \bm G )^{-1} (\bm I-2 \bm \Gamma_+(p))\bm 1 k& =  \bm \beta +\bm 1k.
	\end{align*}
	The result follows.	
	\end{proof}	
	\newpage 

	\bibliographystyle{te}
	\addcontentsline{toc}{section}{\refname}\bibliography{jmpbibl.bib}

\begin{thebibliography}{28}
\newcommand{\enquote}[1]{``#1''}
\providecommand{\natexlab}[1]{#1}
\providecommand{\url}[1]{\texttt{#1}}
\providecommand{\urlprefix}{URL }
\providecommand{\bibAnnoteFile}[1]{%
  \IfFileExists{#1}{\begin{quotation}\noindent\textsc{Key:} #1\\
  \textsc{Annotation:}\ \input{#1}\end{quotation}}{}}
\providecommand{\bibAnnote}[2]{%
  \begin{quotation}\noindent\textsc{Key:} #1\\
  \textsc{Annotation:}\ #2\end{quotation}}

\bibitem[{Alonso et~al.(2008)Alonso, Dessein, and Matouschek}]{alonso2008}
Alonso, Ricardo, Wouter Dessein, and Niko Matouschek (2008), \enquote{When does
  coordination require centralization?} \emph{American Economic Review}, 98,
  145--79.
\bibAnnoteFile{alonso2008}

\bibitem[{Alonso et~al.(2015)Alonso, Dessein, and Matouschek}]{alonso2015}
Alonso, Ricardo, Wouter Dessein, and Niko Matouschek (2015),
  \enquote{Organizing to adapt and compete.} \emph{American Economic Journal:
  Microeconomics}, 7, 158--87.
\bibAnnoteFile{alonso2015}

\bibitem[{Asch(1951)}]{asch_effects_1951}
Asch, {S.E.} (1951), \enquote{Effects of group pressure upon the modification
  and distortion of judgments.} In \emph{Groups, leadership and men; research
  in human relations.}, 177--190, Carnegie Press, Oxford, England.
\bibAnnoteFile{asch_effects_1951}

\bibitem[{Ballester et~al.(2006)Ballester, Calvó-Armengol, and
  Zenou}]{ballester_who_2006}
Ballester, Coralio, Antoni Calvó-Armengol, and Yves Zenou (2006),
  \enquote{Who's who in networks. wanted: The key player.} \emph{Econometrica},
  74, 1403--1417.
\bibAnnoteFile{ballester_who_2006}

\bibitem[{Bardhi(2024)}]{bardhi_selective_2023}
Bardhi, Arjada (2024), \enquote{Attributes: Selective learning and influence.}
  \emph{Econometrica}, 92, 311--353.
\bibAnnoteFile{bardhi_selective_2023}

\bibitem[{Bardhi and Bobkova(2023)}]{bardhi_local_2023}
Bardhi, Arjada and Nina Bobkova (2023), \enquote{Local evidence and diversity
  in minipublics.} \emph{Journal of Political Economy}, 131, 2451--2508.
\bibAnnoteFile{bardhi_local_2023}

\bibitem[{Bardhi and Callander(2026)}]{bardhi_learning_2026}
Bardhi, Arjada and Steven Callander (2026), \enquote{Learning in a correlated
  world.} \emph{Annual Review of Economics}, 00, 00. DOI:
  https://doi.org/10.1146/annurev-economics-051624-072515.
\bibAnnoteFile{bardhi_learning_2026}

\bibitem[{Callander(2011)}]{callander_searching_2011}
Callander, Steven (2011), \enquote{Searching and learning by trial and error.}
  \emph{American Economic Review}, 101, 2277--2308.
\bibAnnoteFile{callander_searching_2011}

\bibitem[{Cetemen et~al.(2023)Cetemen, Urgun, and
  Yariv}]{cetemen_collective_2023}
Cetemen, Doruk, Can Urgun, and Leeat Yariv (2023), \enquote{Collective
  progress: Dynamics of exit waves.} \emph{Journal of Political Economy}, 131,
  2402--2450.
\bibAnnoteFile{cetemen_collective_2023}

\bibitem[{Dessein and Santos(2006)}]{dessein_adaptive_2006}
Dessein, Wouter and Tano Santos (2006), \enquote{Adaptive organizations.}
  \emph{Journal of Political Economy}, 114, 956--995.
\bibAnnoteFile{dessein_adaptive_2006}

\bibitem[{Diamond(1982)}]{diamond_aggregate_1982}
Diamond, {Peter A.} (1982), \enquote{Aggregate demand management in search
  equilibrium.} \emph{Journal of Political Economy}, 90, 881--894.
\bibAnnoteFile{diamond_aggregate_1982}

\bibitem[{Garfagnini(2018)}]{garfagnini_uncertainty_2018}
Garfagnini, Umberto (2018), \enquote{{Uncertainty in a Connected World}.}
  Working Paper.
\bibAnnoteFile{garfagnini_uncertainty_2018}

\bibitem[{Gibbons et~al.(2013)Gibbons, Matouschek, and
  Roberts}]{gibbons_decisions_2013}
Gibbons, Robert, Niko Matouschek, and John Roberts (2013), \enquote{Decisions
  in organizations.} In \emph{The Handbook of Organizational Economics},
  373--431, Princeton University Press.
\bibAnnoteFile{gibbons_decisions_2013}

\bibitem[{Hiriart-Urruty and
  Lemaréchal(2004)}]{hiriart-urruty_fundamentals_2004}
Hiriart-Urruty, Jean-Baptiste and Claude Lemaréchal (2004), \emph{Fundamentals
  of convex analysis}. Springer Science \& Business Media.
\bibAnnoteFile{hiriart-urruty_fundamentals_2004}

\bibitem[{Jackson and Zenou(2015)}]{jackson_games_2015}
Jackson, Matthew~O. and Yves Zenou (2015), \enquote{Games on networks.} In
  \emph{Handbook of Game Theory with Economic Applications} (H.~Peyton Young
  and Shmuel Zamir, eds.), volume~4, 95--163, Elsevier.
\bibAnnoteFile{jackson_games_2015}

\bibitem[{Jørgensen and Zaccour(2014)}]{jorgensen_survey_2014}
Jørgensen, Steffen and Georges Zaccour (2014), \enquote{A survey of
  game-theoretic models of cooperative advertising.} \emph{European Journal of
  Operational Research}, 237, 1--14.
\bibAnnoteFile{jorgensen_survey_2014}

\bibitem[{Krech et~al.(1962)Krech, Crutchfield, and
  Ballachey}]{krech_individual_1962}
Krech, David, {Richard S}. Crutchfield, and {Egerton L.} Ballachey (1962),
  \emph{Individual in society: {A} textbook of social psychology.} McGraw-Hill.
\bibAnnoteFile{krech_individual_1962}

\bibitem[{Lin(2023)}]{lin_strategic_2023}
Lin, Jianjing (2023), \enquote{Strategic complements or substitutes? the case
  of adopting health information technology by u.s.~hospitals.} \emph{The
  Review of Economics and Statistics}, 105, 1237--1254.
\bibAnnoteFile{lin_strategic_2023}

\bibitem[{Marschak and Radner(1972)}]{marschak_economic_1972}
Marschak, Jacob and Roy Radner (1972), \emph{Economic Theory of Teams}.
  Number~22 in Cowles Foundation for Research in Economics Monographs, Yale
  University Press.
\bibAnnoteFile{marschak_economic_1972}

\bibitem[{Milgrom and Roberts(1990)}]{milgrom_rationalizability_1990}
Milgrom, Paul and John Roberts (1990), \enquote{Rationalizability, learning,
  and equilibrium in games with strategic complementarities.}
  \emph{Econometrica}, 58, 1255--1277.
\bibAnnoteFile{milgrom_rationalizability_1990}

\bibitem[{Monderer and Shapley(1996)}]{monderer_potential_1996}
Monderer, Dov and { Lloyd S.} Shapley (1996), \enquote{Potential {Games}.}
  \emph{Games and Economic Behavior}, 14, 124--143.
\bibAnnoteFile{monderer_potential_1996}

\bibitem[{Morris and Ui(2004)}]{morris_best_2004}
Morris, Stephen and Takashi Ui (2004), \enquote{Best response equivalence.}
  \emph{Games and Economic Behavior}, 49, 260--287.
\bibAnnoteFile{morris_best_2004}

\bibitem[{Neyman(1997)}]{neyman_correlated_1997}
Neyman, Abraham (1997), \enquote{Correlated equilibrium and potential games.}
  \emph{International Journal of Game Theory}, 26, 223--227.
\bibAnnoteFile{neyman_correlated_1997}

\bibitem[{Radner(1962)}]{radner_team_1962}
Radner, Roy (1962), \enquote{Team decision problems.} \emph{The Annals of
  Mathematical Statistics}, 33, 857--881.
\bibAnnoteFile{radner_team_1962}

\bibitem[{Rantakari(2008)}]{rantakari_governing_2008}
Rantakari, Heikki (2008), \enquote{Governing adaptation.} \emph{The Review of
  Economic Studies}, 75, 1257--1285.
\bibAnnoteFile{rantakari_governing_2008}

\bibitem[{Shreve(2004)}]{shreve_stochastic_2004}
Shreve, Steven~E (2004), \emph{Stochastic calculus for finance 2,
  Continuous-time models}. Springer, New York, NY; Heidelberg.
\bibAnnoteFile{shreve_stochastic_2004}

\bibitem[{Topkis(1998)}]{topkis_supermodularity_1998}
Topkis, {Donald M.} (1998), \emph{Supermodularity and Complementarity}.
  Princeton University Press.
\bibAnnoteFile{topkis_supermodularity_1998}

\bibitem[{Vives(1990)}]{vives_nash_1990}
Vives, Xavier (1990), \enquote{Nash equilibrium with strategic
  complementarities.} \emph{Journal of Mathematical Economics}, 19, 305--321.
\bibAnnoteFile{vives_nash_1990}

\end{thebibliography}
	
\end{document}